\newtheorem{theorem}{Theorem}
\newtheorem{definition}[theorem]{Definition}
\newtheorem{assumption}[theorem]{Assumption}
\newtheorem{proposition}[theorem]{Proposition}
\newtheorem{remark}[theorem]{Remark}
\newcommand{\TF}{\mathcal{F}}
\newcommand{\norm}[1]{\left\Vert #1\right\Vert}
\newcommand{\op}{{\mathcal H}}
\def\field#1{\mathbb #1}%
\def\R{\field{R}}%
\def\N{\field{N}}%
\def\Z{\field{Z}}%
\newcommand{\SF}{\mathcal{S}}
\newcommand{\Let}{:=}
\newcommand{\KK}{\mathcal{K}_{\infty}}
\newcommand{\intcc}[1]{\ensuremath{{\left[#1\right]}}}
\definecolor{myco}{rgb}{0.55, 0.0, 0.63}
\renewcommand{\emptyset}{{\varnothing}}
\begin{document}
	
	\title[Compositional Synthesis of Symbolic Models for Networks of Switched Systems]	{Compositional Synthesis of Symbolic Models for Networks of Switched Systems$^*$}
	
	\thanks{$^*$ This works has been accepted for publication in the IEEE Control Systems Letters
		(L-CSS)}
	\author{Abdalla Swikir$^1$}
	\author{Majid Zamani$^{2,3}$}
	\address{$^1$Hybrid Control Systems Group, Technical University of Munich, Germany.}
		\email{abdalla.swikir@tum.de}
	\address{$^2$Computer Science Department, University of Colorado Boulder, USA.}
	\email{majid.zamani@colorado.edu}
	\address{$^3$Computer Science Department, Ludwig Maximilian University of Munich, Germany.}
	\maketitle


\begin{abstract}                          
	In this paper, we provide a compositional methodology for constructing symbolic models for networks of discrete-time switched systems. We first define a notion of so-called augmented-storage functions to relate switched subsystems and their symbolic models. Then we show that if some dissipativity type conditions are satisfied, 
one can establish a notion of so-called alternating simulation function as a relation between a network of symbolic models and that of switched subsystems. The alternating simulation function provides an upper bound for the mismatch between the output behavior of the interconnection of switched subsystems and that of their
symbolic models. Moreover, we provide an approach to construct symbolic models for discrete-time switched subsystems under some assumptions ensuring incremental passivity of each mode of switched subsystems.
Finally, we illustrate the effectiveness of our results through two examples.
\end{abstract}

	\section{Introduction}
The notion of symbolic models (a.k.a. finite abstractions) plays an important role
in the control of hybrid systems (see \cite{Tabu} and the references
therein). Symbolic models allow us to use automata-theoretic methods \cite{MalerPnueliSifakis95} to design controllers for hybrid systems with respect to logic specifications such as those expressed as linear temporal logic (LTL) formulae \cite{Katoen}. Symbolic models are established for incrementally stable switched systems, a class of hybrid systems \cite{liberzon}, by providing approximate bisimulation relations between them \cite{Girard,Gossler,SAOUD,Corronc}. However, as the complexity of constructing symbolic models grows exponentially in the number of state variables in the concrete system, the approaches proposed in \cite{Girard,Gossler,SAOUD} limit the application of symbolic models to only low-dimensional switched systems. 
Although the result in \cite{Corronc} provides a state-space discretization-free approach for computing symbolic models of incrementally stable switched systems, this approach is still monolithic and reduces the computational complexity only for switched systems with few modes, see \cite[Section IV(D)]{Corronc}.

Motivated by the above limitation, in this work we aim at proposing a compositional framework for constructing symbolic models for interconnected switched systems. To do so, we first $i)$ partition the overall concrete switched system into a number of concrete switched subsystems and construct symbolic models of them individually; $ii)$ then establish a compositional scheme that allows us to construct a symbolic models of the overall network using those individual ones.

The compositional framework based on a divide-and-conquer scheme \cite{Michael} is not new. Several results have already introduced compositional techniques for constructing symbolic models of networks of control subsystems. The results in \cite{Tazaki2008,7403879,Majumdar,arxiv,swikir} provide techniques to approximate networks of control subsystems by networks of symbolic models by assuming some stability property of the concrete subsystems. Other compositional approaches provide techniques to design symbolic models of concrete networks without requiring any stability property or condition on the gains of subsystems \cite{meyer,omar,Kim}. 
However, none of the aforementioned results in \cite{Tazaki2008,7403879,Majumdar,arxiv,swikir,meyer,omar,Kim} provide a compositional framework for constructing symbolic models for interconnected switched systems. 

In this paper, we provide a compositional methodology for the construction of symbolic models of interconnected switched systems based on dissipativity theory \cite{murat}. We first define a notion of so-called augmented-storage functions to relate switched subsystems and their symbolic models. Then, by leveraging dissipativity-type compositional conditions, we construct a notion of so-called alternating simulation functions as a relation between the interconnection of switched subsystems and that of their symbolic models. This alternating simulation function allows one to determine quantitatively the mismatch between the output behavior of the interconnection of switched subsystems and that of their symbolic models.
Moreover, we provide an approach to construct symbolic models together with their corresponding augmented-storage functions for discrete-time switched subsystems under some assumptions ensuring incremental passivity of each mode of switched subsystems. 
Finally, we apply our results to a model of road traffic by constructing compositionally a symbolic model of a network containing $50$ cells of $1000$ meters each. We also design controllers compositionally maintaining the density of traffic lower than $30$ vehicles per cell. Additionally, we apply our results to an interconnection of switched subsystems admitting multiple incrementally passive storage functions.

The results presented in this paper are mainly concerned with the compositional construction of symbolic models of
	interconnected discrete-time switched systems. The constructed symbolic models here can be used to synthesize controllers monolithically or also compositionally. Compositional approaches for controller synthesis can be found in \cite{8115304,meyer} and references therein.

\section{Notation and Preliminaries}\label{1:II}
\subsection{Notation}
We denote by $\R$, $\Z$, and $\N$ the set of real numbers, integers, and non-negative integers, respectively.
These symbols are annotated with subscripts to restrict them in
the obvious way, e.g., $\R_{>0}$ denotes the positive real numbers.
Given $N\in\N_{\ge1}$, vectors $\nu_i\in\R^{n_i}$, $n_i\in\N_{\ge1}$, and $i\in[1;N]$, we
use $\nu=[\nu_1;\ldots;\nu_N]$ to denote the vector in $\R^n$ with
$n=\sum_i n_i$ consisting of the concatenation of vectors~$\nu_i$.
The closed interval in $\N$ is denoted by $[a;b]$ for $a,b\in\N$ and $a\le b$. 
We denote by $\mathsf{diag}(A_1,\ldots,A_N)$ the block diagonal matrix with diagonal matrix entries $A_1,\ldots,A_N$.
We denote the identity matrix in $\R^{n\times n}$ by $I_n$. 
The individual elements in a matrix $A\in \R^{m\times n}$, are denoted by $\{A\}_{ij}$, where $i\in[1;m]$ and $j\in[1;n]$. We denote by $\norm{\cdot}$ the infinity norm. 
We denote by $|\cdot|$ the cardinality of a given set and by $\emptyset$ the empty set. For any set \mbox{$S\subseteq\R^n$} of the form of finite union of boxes, e.g., $S=\bigcup_{j=1}^MS_j$ for some $M\in\N$, where $S_j=\prod_{i=1}^n [c_i^j,d_i^j]\subseteq \R^n$ with $c^j_i<d^j_i$, and positive constant $\eta\leq\emph{span}(S)$, where $\emph{span}(S)=\min_{j=1,\ldots,M}\eta_{S_j}$ and \mbox{$\eta_{S_j}=\min\{|d_1^j-c_1^j|,\ldots,|d_n^j-c_n^j|\}$}, we define \mbox{$[S]_{\eta}=\{a\in S\,\,|\,\,a_{i}=k_{i}\eta,k_{i}\in\mathbb{Z},i=1,\ldots,n\}$}.
The set $[S]_{\eta}$ will be used as a finite approximation of the set $S$ with precision $\eta$. Note that $[S]_{\eta}\neq\emptyset$ for any $\eta\leq\emph{span}(S)$.   
We use notations $\mathcal{K}$ and $\mathcal{K}_\infty$
to denote different classes of comparison functions, as follows:
$\mathcal{K}=\{\alpha:\mathbb{R}_{\geq 0} \rightarrow \mathbb{R}_{\geq 0} |$ $ \alpha$ is continuous, strictly increasing, and $\alpha(0)=0\}$; $\mathcal{K}_\infty=\{\alpha \in \mathcal{K} |$ $ \lim\limits_{r \rightarrow \infty} \alpha(r)=\infty\}$.
\subsection{Discrete-Time Switched and Transition Systems} 
In this work we consider discrete-time switched systems of the following form.
\begin{definition}\label{def:sys1}
	A discrete-time switched system $\Sigma$ is defined by the tuple $\Sigma=(\mathbb X,P,\mathbb W,F,\mathbb Y_1, \mathbb Y_2, h_1, h_2)$,
	where 
	\begin{itemize}
		\item 	$\mathbb X, \mathbb W, \mathbb Y_1$, and $\mathbb Y_2$ are the state set, internal input set, external output set, and internal output set, respectively, and are assumed to be subsets of normed vector spaces with appropriate finite dimensions; 
		\item $P=\{1\cdots,m\}$ is the finite set of modes;
		\item $F=\{f_1,\cdots,f_m\}$ is a collection of set-valued maps $f_p: \mathbb X\times \mathbb W\rightrightarrows\mathbb X $ for all $p\in P$;
		\item $h_1: \mathbb X \rightarrow \mathbb Y_1 $ is the external output map.
		\item $h_2: \mathbb X \rightarrow \mathbb Y_2 $ is the internal output map.
	\end{itemize} 
	The discrete-time switched system $\Sigma $ is described by difference inclusions of the form
		\begin{align}\label{eq:2}
		\Sigma:\left\{
		\begin{array}{rl}
		\mathbf{x}(k+1)&\!\!\!\!\in f_{\mathsf{p}(k)}(\mathbf{x}(k),\omega(k)),\\
		\mathbf{y}_1(k)&\!\!\!\!=h_1(\mathbf{x}(k)),\\
		\mathbf{y}_2(k)&\!\!\!\!=h_2(\mathbf{x}(k)),
		\end{array}
		\right.
		\end{align}where $\mathbf{x}:\mathbb{N}\rightarrow \mathbb X $, $\mathbf{y}_1:\mathbb{N}\rightarrow \mathbb Y_1$, $\mathbf{y}_2:\mathbb{N}\rightarrow \mathbb Y_2$, $\mathsf{p}:\mathbb{N}\rightarrow P$, and $\omega:\mathbb{N}\rightarrow \mathbb W$ are the state signal, external output signal, internal output signal, switching signal, and internal input signal, respectively.
	We denote by $\Sigma_{p}$ the system in \eqref{eq:2} with constant switching signal $\mathsf{p}(k)=p\in P ~\forall k\in \N$.
	We use $\mathbf{X}_{x_0,\overline{p},\overline{\omega}}$ and $\mathbf{Y}_{x_0,\overline{p},\overline{\omega}}$ to denote the sets of infinite state and external output runs of $\Sigma$, respectively, associated with infinite switching sequence $\overline{p}=\{p_0,p_1,\ldots\}$, infinite internal input sequence $\overline{\omega}=\{w_0,w_1,\ldots\}$, and initial state $x_0\in \mathbb{X}$.
	
	Let $\phi_k, k \in \N_{\ge1}, $ denote the time when the $k$-th switching instant occurs and
	define $\Phi := \{\phi_k : k \in \N_{\ge1}\}$ as the set of switching instants.
	We assume that signal $\mathsf{p}$ satisfies a dwell-time condition \cite{Morse} (i.e. there exists $k_d \in \N_{\ge1}$, called the
	dwell-time, such that for all consecutive switching time instants $\phi_k,\phi_{k+1} \in \Phi$, $\phi_{k+1}-\phi_{k}\geq k_d$, for any $k\in\N$). 
	
	System $\Sigma$ is called deterministic if $|f_p(x,w)|\leq1$ $ \forall x\in \mathbb X, \forall p\in P, \forall w \in \mathbb W$, and non-deterministic otherwise. System $\Sigma$ is called blocking if $\exists x\in \mathbb X, \forall p\in P, \forall w \in \mathbb W $ such that $|f_p(x,w)|=0$ and non-blocking if $|f_p(x,w)|\neq 0$ $ \forall x\in \mathbb X, \exists p\in  P, \exists w \in \mathbb W$.  System $\Sigma$ is called finite if $\mathbb X$ and $\mathbb W$ are finite sets and infinite otherwise. In this paper, we only deal with non-blocking systems.
\end{definition}
Next, we introduce a notion of so-called transition systems, inspired by the one in \cite{Girard}, to provide an alternative description of switched systems that can be later directly related to their symbolic models
\begin{definition}\label{tsm} Given a discrete-time switched system $\Sigma\!=\!(\mathbb X,P,\mathbb W,F,\mathbb Y_1, \mathbb Y_2, h_1, h_2)$, we define the associated transition system $T(\Sigma)\!=\!(X,U,W,\TF,Y_1, Y_2, \op_1, \op_2)$. 
	where:
	\begin{itemize}
		\item  $X=\mathbb X\times P\times \{0,\cdots,k_d-1\}$ is the state set; 
		\item $U=P$ is the external input set;
		\item $W=\mathbb{W}$ is the internal input set;
		\item $\TF$ is the transition function given by $(x',p',l')\in \TF((x,p,l),u,w)$ if and only if  $x'\in f_p(x,w),u=p$ and the following scenarios hold:
		\begin{itemize} 
			\item$l<k_d-1$, $p'=p$ and $l'=l+1$: switching is not allowed because the time elapsed since
			the latest switch is strictly smaller than the dwell time;
			\item $l=k_d-1$, $p'=p$ and $l'=k_d-1$: switching is allowed but no switch occurs;
			\item $l=k_d-1$, $p'\neq p$ and $l'=0$: switching
			is allowed and a switch occurs;
		\end{itemize}
		\item $Y_1=\mathbb{Y}_1$ is the external output set;
		\item $Y_2=\mathbb{Y}_2$ is the internal output set;
		\item $\mathcal{H}_1:X\rightarrow Y_1$ is the external output map defined as $\mathcal{H}_1(x,p,l)=h_1(x)$.
		\item $\mathcal{H}_2:X\rightarrow Y_2$ is the internal output map defined as $\mathcal{H}_2(x,p,l)=h_2(x)$.
	\end{itemize}
\end{definition}
We use $T(\mathbf{X})_{z_0,\overline{u},\overline{\omega}}$ and $T(\mathbf{Y})_{z_0,\overline{u},\overline{\omega}}$ to denote the sets of infinite state and external output runs of $T(\Sigma)$, respectively,  associated with infinite external input sequence $\overline{u}=\{u_0,u_1,\ldots\}$, infinite internal input sequence $\overline{\omega}=\{w_0,w_1,\ldots\}$, and initial state $z_0=(x_0,p_0,l_0)\in X$, where $u_0=p_0$ and $l_0=0$.

In the next proposition, we show that sets $\mathbf{Y}_{x_0,\overline{p},\overline{\omega}}$ and $T(\mathbf{Y})_{z_0,\overline{u},\overline{\omega}}$, where $\overline{p}=\overline{u}$ and $z_0=(x_0,p_0,0)$, are equivalent.
\begin{proposition}\label{traj}
	Consider $\Sigma$, $T(\Sigma)$, $\overline{p}=\{p_0,p_1,\ldots\}=\overline{u}$, $\overline{\omega}=\{w_0,w_1,\ldots\}$, and $x_0\in \mathbb{X}$. Then, $\mathbf{Y}_{x_0,\overline{p},\overline{\omega}}=T(\mathbf{Y})_{z_0,\overline{u},\overline{\omega}}$, 
	where $z_0=(x_0,p_0,0)$.
\end{proposition}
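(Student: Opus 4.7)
The plan is to prove the set equality by establishing a one-to-one correspondence between state trajectories of $\Sigma$ and those of $T(\Sigma)$, and then push this correspondence through the output maps, which agree by construction ($\mathcal{H}_1=h_1\circ\pi_{\mathbb X}$, $\mathcal{H}_2=h_2\circ\pi_{\mathbb X}$). I would prove the two inclusions separately via induction on the time index $k$.

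For the inclusion $\mathbf{Y}_{x_0,\overline{p},\overline{\omega}}\subseteq T(\mathbf{Y})_{z_0,\overline{u},\overline{\omega}}$, I would take an arbitrary $\mathbf{x}\in\mathbf{X}_{x_0,\overline{p},\overline{\omega}}$, thus satisfying $\mathbf{x}(k+1)\in f_{p_k}(\mathbf{x}(k),w_k)$ with the switching sequence $\overline{p}$ obeying the dwell-time $k_d$. I would then explicitly construct a candidate state run $\mathbf{z}(k)=(\mathbf{x}(k),p_k,l(k))$ of $T(\Sigma)$, where $l(k)$ is defined as $\min\{k-\phi_j, k_d-1\}$ with $\phi_j$ the largest switching instant no greater than $k$ (and $l(k)=\min\{k,k_d-1\}$ before the first switch). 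Because of the dwell-time, between two switches $\phi_j$ and $\phi_{j+1}$ the counter increases up to $k_d-1$ and then stays there, so the three scenarios in Definition \ref{tsm} are exactly the three cases that arise: strict increase of $l$ while $l<k_d-1$ (no switch forced), $l$ remains at $k_d-1$ with $p_{k+1}=p_k$ (no switch though allowed), or $l$ resets to $0$ at a switching instant (switch allowed by dwell-time and occurring). Combined with $\mathbf{x}(k+1)\in f_{p_k}(\mathbf{x}(k),w_k)=f_{u_k}(\mathbf{x}(k),w_k)$, this verifies $\mathbf{z}(k+1)\in\TF(\mathbf{z}(k),u_k,w_k)$. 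The initial condition matches because $z_0=(x_0,p_0,0)$. Applying $\mathcal{H}_1$ yields $\mathcal{H}_1(\mathbf{z}(k))=h_1(\mathbf{x}(k))$, which gives the desired containment of output runs.

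For the reverse inclusion, I would take an arbitrary $T(\mathbf{x})\in T(\mathbf{X})_{z_0,\overline{u},\overline{\omega}}$ of the form $\mathbf{z}(k)=(\mathbf{x}(k),\mathsf{p}(k),l(k))$. The definition of $\TF$ in Definition \ref{tsm} forces $\mathbf{x}(k+1)\in f_{\mathsf{p}(k)}(\mathbf{x}(k),w_k)$, and by inspecting the three scenarios the induced $\mathsf{p}$ can only switch (i.e.\ satisfy $\mathsf{p}(k+1)\neq \mathsf{p}(k)$) when $l(k)=k_d-1$, after which $l$ is reset to $0$; therefore any two consecutive switching instants of $\mathsf{p}$ are separated by at least $k_d$ steps, so $\mathsf{p}$ is a valid switching signal for $\Sigma$. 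Since $u_k=\mathsf{p}(k)$ and $\overline{u}=\overline{p}$, $\mathbf{x}$ is then a state run of $\Sigma$ under $(\overline{p},\overline{\omega})$ starting from $x_0$, and its external output $h_1(\mathbf{x}(k))=\mathcal{H}_1(\mathbf{z}(k))$ lies in $\mathbf{Y}_{x_0,\overline{p},\overline{\omega}}$.

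The only mildly delicate point is verifying the correspondence between the dwell-time condition on $\overline{p}$ for $\Sigma$ and the counter-based local rules in $\TF$; once one carefully tracks when $l$ is allowed to reset, the three scenarios of Definition \ref{tsm} match exactly the admissible transitions of the switching signal, and the rest is a bookkeeping induction. I do not anticipate any deeper obstacle.
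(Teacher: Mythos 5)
Your proof is correct and follows essentially the same route as the paper's: exhibit, for each state run of $\Sigma$, a corresponding run of $T(\Sigma)$ (and conversely) and push the correspondence through $\mathcal{H}_1=h_1$. In fact your argument is considerably more detailed than the paper's two-sentence sketch, since you explicitly construct the counter $l(k)$ and verify that the three transition scenarios match the dwell-time condition, which is precisely the bookkeeping the paper leaves implicit.
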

\begin{proof} 
	The proof consists of showing that for any infinite run in $\mathbf{Y}_{x_0,\overline{p},\overline{\omega}}$, denoted $\mathbf{y}_{x_0,\overline{p},\overline{\omega}}$, there exists an infinite run in $T(\mathbf{Y})_{z_0,\overline{u},\overline{\omega}}$, denoted $T(\mathbf{y})_{z_0,\overline{u},\overline{\omega}}$, and vise versa. 
	Since $\overline{p}$, $\overline{\omega}$, and $x_0$ are given, one can construct $\mathbf{y}_{x_0,\overline{p},\overline{\omega}}\in \mathbf{Y}_{x_0,\overline{p},\overline{\omega}}$. Then by utilizing the definition of $\mathcal{H}_1$ in Definition \ref{tsm}, we have $\mathbf{y}_{x_0,\overline{p},\overline{\omega}}=T(\mathbf{y})_{z_0,\overline{u},\overline{\omega}}\in T(\mathbf{Y})_{z_0,\overline{u},\overline{\omega}}$.  Now since $\overline{u}=\overline{p}$ and $z_0=(x_0,p_0,0)$, one can construct $T(\mathbf{y})_{z_0,\overline{u},\overline{\omega}}\in T(\mathbf{Y})_{z_0,\overline{u},\overline{\omega}}$. Then again by using the definition of $\mathcal{H}_1$ in Definition \ref{tsm}, we have $T(\mathbf{y})_{z_0,\overline{u},\overline{\omega}}=\mathbf{y}_{x_0,\overline{p},\overline{\omega}}\in \mathbf{Y}_{x_0,\overline{p},\overline{\omega}}$.
\end{proof}
From now on, we use $\Sigma$ and $T(\Sigma)$ interchangeably.

If $\Sigma$ does not have internal inputs, which is the case for interconnected systems (cf. Definition \ref{def:5}), Definition \ref{def:sys1} reduces to the tuple $\Sigma=(\mathbb X,P,F,\mathbb Y,H)$, the set-valued map $f_p$ becomes $f_p:\mathbb X\rightrightarrows\mathbb X$, and \eqref{eq:2} reduces to:
	\begin{align}\label{eq:3}
	\Sigma:\left\{
	\begin{array}{rl}
	\mathbf{x}(k+1)\!\!\!\!&\in f_{\mathsf{p}(k)}(\mathbf{x}(k)),\\
	\mathbf{y}(k)\!\!\!\!&=h(\mathbf{x}(k)).
	\end{array}
	\right.
	\end{align}
Correspondingly, Definition \ref{tsm} reduces to tuple $T(\Sigma)=(X,U,\TF,Y,\op)$, and the transition function $\TF$ is given by $(x',p',l')\in \TF((x,p,l),u)$ if and only if $ x'\in f_p(x),u=p$ and the following scenarios hold:
\begin{itemize}
	\item $l<k_d-1$, $p'=p$ and $l'=l+1$;
	\item $l=k_d-1$, $p'=p$ and $l'=k_d-1$;
	\item $l=k_d-1$, $p'\neq p$ and $l'=0$.
\end{itemize}
\section{Augmented-Storage and Alternating Simulation Functions}\label{I}
Inspired by the definition of the storage function in \cite{7857702}, we introduce a notion of so-called augmented-storage function, which relates two transition systems with internal inputs and outputs.
\begin{definition}\label{sf} 
	Consider $T(\Sigma)\!\!=\!\!\!(X,\!U,\!W,\!\TF,\!Y_1, \!Y_2, \!\op_1,\! \op_2)$  and $\hat{T}(\hat{\Sigma})\!\!=\!\!(\hat{X},\!\hat{U},\!\hat{W},\!\hat{\TF},\!\hat{Y}_1, \!\hat{Y}_2, \!\hat{\op}_1, \hat{\op}_2)$ where $\hat W\subseteq W$ and $\hat Y_1\subseteq Y_1$. A function $ \mathcal{S}:X\times \hat X \to \mathbb{R}_{\geq0} $ is called an augmented-storage function from $\hat{T}(\hat{\Sigma})$ to $T(\Sigma)$ if $\forall (x,p,l)\in X$ and $\forall (\hat{x},p,l)\in\hat{X}$,  one has
		\begin{align}\label{sf1}
		\alpha (\Vert \op_1(x,p,l)\!-\hat{\op}_1(\hat{x},p,l)\Vert ) \!\leq\! \SF(\!(x,p,l),(\hat{x},p,l)\!),
		\end{align}and $\forall (x,p,l)\in X$ and $\forall (\hat{x},p,l)\in\hat{X}$, $\forall \hat u\in\hat{U}$, $\forall w\in W$, $\forall \hat w\in\hat{W}$, $\forall (x',p',l')\in\TF((x,p,l),\hat{u},w)$ $\exists (\hat{x}',p',l')\in\hat{\TF}((\hat{x},p,l),\hat{u},\hat{w})$ such that one gets
		\begin{align}\label{sf2}
		&\SF((x',p',l'),(\hat{x}',p',l'))
		\!\leq \!\sigma \SF((x,p,l),(\hat{x},p,l))\!+\!\varepsilon\!+\!\begin{bmatrix}
		w\!\!-\!\!\hat w\\
		\!\op_2(x,p,l)\!\!-\!\!\hat\op_2(\hat{x},p,l)\!
		\end{bmatrix}^{\!\!T}\!\overbrace{\begin{bmatrix}\!
			R^{11}&\!\!\!R^{12}\!\\
			\!R^{21}&\!\!\!R^{22}\!\!
			\end{bmatrix}}^{R:=}\!\!
		\begin{bmatrix}
		w\!\!-\!\!\hat w\\
		\!\op_2(x,p,l)\!\!-\!\!\hat \op_2(\hat{x},p,l)\!
		\end{bmatrix},
		\end{align}for some $\alpha \in \mathcal{K}_{\infty}$,  $0<\sigma<1$, $\varepsilon\in \mathbb{R}_{\geq 0}$, and some symmetric matrix $R$ of appropriate dimension with conformal block partitions $R^{ij}$. $i,j\in[1;2]$. We say that $\hat{T}(\hat{\Sigma})$ is an abstraction of $T(\Sigma)$ if there exists an augmented-storage function from $\hat{T}(\hat{\Sigma})$ to $T(\Sigma)$. In addition, if $\hat{T}(\hat{\Sigma})$ is finite ($X$ and $W$ are finite sets), we say that $\hat{T}(\hat{\Sigma})$ is a symbolic model of $T(\Sigma)$.
\end{definition}

Now, we introduce a notion of so-called alternating simulation functions, inspired by Definition 1 in \cite{Girard2009566}, which quantitatively relates transition systems without internal inputs and outputs.
\begin{definition}\label{sfg}
	Consider $T(\Sigma)=(X,U,\TF,Y,\op)$  and $\hat{T}(\hat{\Sigma})=(\hat{X},\hat{U},\hat{\TF},\hat{Y},\hat{\op})$ where $\hat Y\subseteq Y$. A function $ \tilde{\SF}:X\times \hat X \to \mathbb{R}_{\geq0} $ is called an alternating simulation function from $\hat{T}(\hat{\Sigma})$ to $T(\Sigma)$ if $\forall (x,p,l)\!\!\in\!\!X$ and $\forall (\hat{x},p,l)\!\!\in\!\!\hat{X}$, one has
		\begin{align}\label{sfg1}
		\tilde{\alpha} (\Vert \op(x,p,l)-\hat{\op}(\hat{x},p,l)\Vert ) \!\leq\! \tilde{\SF}((x,p,l),(\hat{x},p,l)),
		\end{align}and $\forall (x,p,l)\in X$, $\forall (\hat{x},p,l)\in\hat{X}$, $\forall \hat u\in\hat{U}$, $\forall (x',p',l')\in\TF((x,p,l),\hat{u})$ $\exists (\hat{x}',p',l')\in\hat{\TF}((\hat{x},p,l),\hat{u})$   
	such that one gets
		\begin{align}\label{sfg2}
		\tilde{\SF}(\!(x',p',l'),(\hat{x}',p',l')\!)
		\!\leq\! \tilde{\sigma} \tilde{\SF}(\!(x,p,l),(\hat{x},p,l)\!)+\tilde{\varepsilon},
		\end{align}for some $\tilde{\alpha}\in \mathcal{K}_{\infty}$,  $0<\tilde{\sigma}<1$, and $\tilde{\varepsilon}\in \mathbb{R}_{\geq 0}$.
\end{definition}

Note that the notions of storage and simulation
	functions in \cite[Definitions~3.1, 3.2]{7857702} are defined between two continuous-time control systems with continuous state sets, whereas we define the augmented-storage and alternating simulation functions between two transition systems associated with two discrete-time switched systems. Moreover, on the
	right-hand side of  \eqref{sf2} and \eqref{sfg2}, we introduce constant $\varepsilon\in \mathbb{R}_{\geq 0}$ to allow the relation to be defined between two systems with either infinite or finite state sets. The role of $\varepsilon$ will become clear in Section $V$ where we introduce
	symbolic models. Such a constant does not appear in \cite[Definitions~3.1, 3.2]{7857702} which makes them only suitable for systems with continuous state sets.

The next result shows that the existence of an alternating simulation function for transition systems implies the existence of an approximate alternating simulation relation between them as defined in \cite{Tabu}.	
\begin{proposition}\label{error}
	Consider $T(\Sigma)=(X,U,\TF,Y,\op)$ and $\hat{T}(\hat{\Sigma})=(\hat{X},\hat{U},\hat{\TF},\hat{Y},\hat{\op})$ where $\hat Y\subseteq Y$. Assume $ \tilde{\SF}$ is an alternating simulation function from $\hat{T}(\hat{\Sigma})$ to ${T}(\Sigma)$ as in Definition \ref{sfg}. Then, relation $R\subseteq X\times \hat{X}$ defined by $R\!=\!\left\{\!((x,p,l),\!(\hat{x},p,l))\!\in\! {X}\!\times\! \hat{X}|\tilde{\SF}((x,p,l),(\hat{x},p,l))\!\leq \!{\varphi}\!\right\},$ where ${\varphi}=\frac{\tilde{\varepsilon}}{(1-\tilde{\sigma})\psi}$, and $\psi$ can be chosen arbitrarily such that $0<\psi<1$, is an $\hat{\varepsilon}$-approximate alternating simulation relation, defined in \cite{Tabu}, from $\hat{T}(\hat{\Sigma})$ to ${T}(\Sigma)$ with $\hat{\varepsilon}\!\!=\!\tilde{\alpha}^{-1}\!({\varphi})$.
\end{proposition}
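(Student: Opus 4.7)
The plan is to verify the two conditions of Tabuada's definition of an $\hat{\varepsilon}$-approximate alternating simulation relation: (i) output closeness on $R$, and (ii) forward-invariance of $R$ under the matched-transition dynamics.

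First I would dispose of the output condition. For any pair $((x,p,l),(\hat{x},p,l))\in R$, the definition of $R$ gives $\tilde{\SF}((x,p,l),(\hat{x},p,l))\leq \varphi$, and inequality \eqref{sfg1} yields $\tilde{\alpha}(\|\op(x,p,l)-\hat{\op}(\hat{x},p,l)\|)\leq \varphi$. Applying $\tilde{\alpha}^{-1}$ (which exists on the relevant range since $\tilde\alpha\in\Kinf$) produces $\|\op(x,p,l)-\hat{\op}(\hat{x},p,l)\|\leq \tilde{\alpha}^{-1}(\varphi)=\hat{\varepsilon}$, which is exactly the output closeness requirement.

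Next I would handle the transition-matching condition. Pick any $((x,p,l),(\hat{x},p,l))\in R$, any $\hat{u}\in\hat{U}$, and any $(x',p',l')\in\TF((x,p,l),\hat{u})$. Invoking \eqref{sfg2}, one obtains some $(\hat{x}',p',l')\in\hat{\TF}((\hat{x},p,l),\hat{u})$ such that
\begin{align*}
\tilde{\SF}((x',p',l'),(\hat{x}',p',l')) &\leq \tilde{\sigma}\,\tilde{\SF}((x,p,l),(\hat{x},p,l))+\tilde{\varepsilon} \leq \tilde{\sigma}\varphi+\tilde{\varepsilon},
\end{align*}
where the second inequality uses membership in $R$. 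It remains to show $\tilde{\sigma}\varphi+\tilde{\varepsilon}\leq \varphi$, which will place $((x',p',l'),(\hat{x}',p',l'))$ back in $R$.

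The only substantive step is this algebraic verification, and it is really the reason behind the particular choice of $\varphi$. Substituting $\varphi=\tilde{\varepsilon}/((1-\tilde{\sigma})\psi)$, the inequality $\tilde{\sigma}\varphi+\tilde{\varepsilon}\leq \varphi$ rearranges to $\tilde{\varepsilon}\leq (1-\tilde{\sigma})\varphi=\tilde{\varepsilon}/\psi$, which holds because $0<\psi<1$. Thus the successor pair lies in $R$, completing the forward-invariance condition and hence the proof. I do not anticipate any real obstacle; the subtlety, such as it is, is just recognizing that $\psi$ is a slack parameter chosen strictly less than one precisely so that the contraction factor $\tilde\sigma$ plus the normalized one-step error $(1-\tilde\sigma)\psi$ remains strictly below one.
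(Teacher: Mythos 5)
Your proof is correct and follows essentially the same route as the paper: both verify the output-closeness condition via \eqref{sfg1} and the forward-invariance of $R$ via \eqref{sfg2} together with the choice of $\varphi$. The only cosmetic difference is that the paper first rewrites \eqref{sfg2} in the form $\tilde{\SF}((x',p',l'),(\hat{x}',p',l'))\leq\max\{\varrho\,\tilde{\SF}((x,p,l),(\hat{x},p,l)),\varphi\}$ with $\varrho=1-(1-\psi)(1-\tilde{\sigma})<1$ before concluding, whereas you carry out the equivalent algebra $\tilde{\sigma}\varphi+\tilde{\varepsilon}\leq\varphi$ directly.
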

\begin{proof}
	The proof consists of showing that $(i)$ $\forall ((x,p,l),(\hat{x},p,l))\in R$ we have $\Vert \op(x,p,l)-\hat{\op}(\hat{x},p,l)\Vert \leq \hat{\varepsilon}$; $(ii)$ $\forall ((x,p,l),(\hat{x},p,l))\in R $ and  $\forall \hat u\in\hat{U}$, $\forall (x',p',l')\in\TF((x,p,l),\hat{u})$ $\exists~ (\hat{x}',p',l')\in\hat{\TF}((\hat{x},p,l),\hat{u})$ satisfying $((x',p',l'),(\hat{x}',p',l'))\in R$. \\
	First observe that \eqref{sfg2} can be written as 
		\begin{align}\label{max}
		\tilde{\SF}(\!(x',p',l'),(\hat{x}',p',l')\!)
		\!\!\leq\!\max\{{\varrho}\tilde{\mathcal{S}}(\!(x,p,l),(\hat{x},p,l)\!),\!{\varphi}\}\!,
		\end{align}
	where $\varrho=1-(1-\psi)(1-\tilde{\sigma})<1$. 
	Then the first item is a simple consequence of the definition of $R$ and condition \eqref{sfg1} (i.e. $\tilde{\alpha} (\Vert \op(x,p,l)-\hat{\op}(\hat{x},p,l)\Vert) \leq \tilde{\SF}((x,p,l),(\hat{x},p,l))\leq\varphi$), which results in $\Vert \op(x,p,l)-\hat{\op}(\hat{x},p,l)\Vert \leq \tilde{\alpha}^{-1}(\varphi)=\hat\varepsilon$. The second item follows immediately from the definition of $R$, inequality \eqref{max}, and the fact that $0<\varrho<1$. In particular, we have $\tilde{\SF}((x',p',l'),(\hat{x}',p',l'))\leq\varphi$ which implies $((x',p',l'),(\hat{x}',p',l'))\in R$.
\end{proof}
\section{Compositionality Result}\label{1:III}
\label{s:inter}
In this section, we consider networks of discrete-time switched subsystems and leverage dissipativity type conditions under which one can construct an alternating simulation function from a network of abstractions to the concrete network by using augmented-storage functions of the subsystems. In the following, we define first a network of discrete-time switched subsystems.
\subsection{Interconnected Systems}
Here, we define the \emph{interconnected discrete-time switched system} as the following.
\begin{definition}\label{def:5}
	Consider $N\in\N_{\ge1}$ switched subsystems $\Sigma_i=(\mathbb X_i,P_i,\mathbb W_i,F_i,\mathbb Y_{1_i}, \mathbb Y_{2_i}, h_{1_i}, h_{2_i})$, and a static matrix $M$ of an appropriate dimension defining the coupling of these subsystems, where\footnote{This condition is required to have a well-defined interconnection.} $M\prod_{i=1}^N \mathbb Y_{2i} \subseteq \prod_{i=1}^N \mathbb W_{i}$. The \emph{interconnected switched
		system} $\Sigma=(\mathbb X,P,F,\mathbb Y,h)$,
	denoted by
	$\mathcal{I}(\Sigma_1,\ldots,\Sigma_N)$, is defined by $\mathbb X \!=\!\prod_{i\!=\!1}^N \!\mathbb X_i$,
	$  P\!=\!\prod_{i\!=\!1}^N  \!P_i$, $F\!=\!\prod_{i\!=\!1}^N  \!F_{i}$, $ \mathbb Y\!=\!\prod_{i\!=\!1}^N  \!\mathbb Y_{1i}$, 
	$h(x)\!\Let \!\intcc{h_{11}(x_1);\ldots;h_{1N}(x_N)}$, where $x=\intcc{x_{1};\ldots;x_{N}}$, with the internal inputs constrained according to
	$\intcc{w_{1};\ldots;w_{N}}=M\intcc{h_{21}(x_1);\ldots;h_{2N}(x_N)}.$
	
	Similarly, given transition subsystem $T_i(\Sigma_i),i\in[1;N]$, one can also define the network of those transition subsystems as $\mathcal{I}(T_1(\Sigma_1),\ldots,T_N(\Sigma_N))$. 
\end{definition}

Next subsection provides one of the main results of the paper on the compositional construction of abstractions for networks of switched systems. 
\subsection{Compositional Abstractions of
	Interconnected Switched Systems}
In this subsection, we assume that we are given $N$ discrete-time switched subsystems $\Sigma_i$, or equivalently, $T_i(\Sigma_i),$ together with their corresponding  abstractions
$\hat{T}_i(\hat{\Sigma}_i)$ and augmented-storage functions $\SF_i$ from
$\hat{T}_i(\hat{\Sigma}_i)$ to $T_i(\Sigma_i)$. 

The next theorem provides a compositional approach on the construction of abstractions of networks of discrete-time switched subsystems and that of the corresponding augmented-storage functions. 
\begin{theorem}\label{thm:3}
	Consider the interconnected transition system
	$T(\Sigma)=(X,U,\TF,Y,\op)$ induced by
	$N\in\N_{\ge1}$
	transition subsystems~$T_i(\Sigma_i),\forall~ i\in [1;N]$. Assume that each $T_i(\Sigma_i)$ and its abstraction $\hat{T}_i(\hat{\Sigma}_i)$ admit an augmented-storage function $\SF_i$ as in Definition \ref{sf}.
	If there exist $\mu_{i}>0$, $i\in[1;N]$, such that the matrix inequality and inclusion
		\begin{align}\label{e:MC1}
		\begin{bmatrix}
		M\\
		I_q 
		\end{bmatrix}^T\overbrace{\begin{bmatrix}
			\tilde{R}_{11}& 	\tilde{R}_{12}\\
			\tilde{R}_{21}& 	\tilde{R}_{22}
			\end{bmatrix}}^{R_{\delta}}\begin{bmatrix}
		M\\
		I_q
		\end{bmatrix}&\preceq0,\\\label{e:MC3}
		M\prod_{i=1}^N {\hat Y}_{2i} \subseteq \prod_{i=1}^N {\hat W}_{i},&
		\end{align}are satisfied, where
	$\tilde{R}_{i'j'}~=~\mathsf{diag}~(\mu_1R_1^{i'j'},\ldots,\mu_NR_N^{i'j'})$, $\forall i',j'\in[1;2]$,
	and $q$ is the number of columns in $M$,
	then 
	\begin{align*}
		\tilde{\SF}((x,p,l),(\hat{x},p,l))\Let\sum_{i=1}^N\mu_i\mathcal{S}_i(({x}_i,p_i,l_i),(\hat{x}_i,p_i,l_i)),
		\end{align*}is an alternating simulation function from $\hat{T}(\hat{\Sigma})={\mathcal{I}}(\hat{T}_1(\hat{\Sigma}_1),\ldots,\hat{T}_{N}(\hat{\Sigma}_{N}))$, with the coupling matrix $M$, to $T(\Sigma)=\mathcal{I}(T_1(\Sigma_1),\ldots,T_{N}(\Sigma_{N}))$.
\end{theorem}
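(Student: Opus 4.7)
The plan is to verify the two defining properties of Definition \ref{sfg} for the candidate function $\tilde{\SF}$, reducing them to the per-subsystem bounds \eqref{sf1}--\eqref{sf2} and absorbing the coupling terms via the matrix inequality \eqref{e:MC1}.

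For the output inequality \eqref{sfg1}, I would exploit the fact that the interconnected output map concatenates the external outputs $h_{1i}$ and that $\norm{\cdot}$ is the infinity norm, so that $\norm{\op(x,p,l)-\hat\op(\hat x,p,l)} = \max_{i \in [1;N]} \norm{h_{1i}(x_i) - \hat h_{1i}(\hat x_i)}$. Applying \eqref{sf1} component-wise with per-subsystem class-$\Kinf$ functions $\alpha_i$, and using $\tilde{\SF} \geq \mu_j \SF_j$ for every $j$, I would set $\tilde{\alpha}(r) := \min_{i \in [1;N]} \mu_i \alpha_i(r)$; this lies in $\Kinf$ and satisfies \eqref{sfg1} immediately.

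For the decrease inequality \eqref{sfg2}, given any external input $\hat u = (\hat u_1,\ldots,\hat u_N)$ and any successor $(x',p',l') \in \TF((x,p,l),\hat u)$ of the concrete interconnection, I would decompose this into per-subsystem transitions $x'_i \in f_{p_i}(x_i,w_i)$ with $w = M \op_2(x,p,l)$. For each $i$, I invoke \eqref{sf2} using the \emph{abstract} internal input determined by $\hat w = M \hat\op_2(\hat x,p,l)$; this choice is admissible for the abstractions because of the inclusion \eqref{e:MC3}. The resulting $\hat x_i' \in \hat{\TF}_i((\hat x_i,p_i,l_i),\hat u_i,\hat w_i)$ then assemble into a legitimate successor of $\hat T(\hat\Sigma)$ by construction of $\hat w$. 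Multiplying each per-subsystem inequality by $\mu_i$ and summing yields
\begin{align*}
\tilde{\SF}((x',p',l'),(\hat x',p',l')) \leq \Big(\max_{i \in [1;N]} \sigma_i\Big)\, \tilde{\SF}((x,p,l),(\hat x,p,l)) + \sum_{i=1}^N \mu_i \varepsilon_i + Q,
\end{align*}
where, after a harmless reordering that groups all $w_i-\hat w_i$ first and all $\op_{2,i}-\hat \op_{2,i}$ second, the cross term $Q$ equals the quadratic form $[w-\hat w;\,\op_2(x,p,l)-\hat\op_2(\hat x,p,l)]^T R_\delta\, [w-\hat w;\,\op_2(x,p,l)-\hat\op_2(\hat x,p,l)]$.

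The decisive step, and the only place the hypothesis \eqref{e:MC1} enters, is to show $Q \leq 0$. Substituting the coupling identity $w-\hat w = M(\op_2(x,p,l)-\hat\op_2(\hat x,p,l))$ factors $Q$ as
\begin{align*}
Q = (\op_2(x,p,l)-\hat\op_2(\hat x,p,l))^T \begin{bmatrix} M \\ I_q \end{bmatrix}^T R_\delta \begin{bmatrix} M \\ I_q \end{bmatrix} (\op_2(x,p,l)-\hat\op_2(\hat x,p,l)) \leq 0
\end{align*}
by \eqref{e:MC1}. This closes \eqref{sfg2} with $\tilde\sigma := \max_{i \in [1;N]} \sigma_i \in (0,1)$ and $\tilde\varepsilon := \sum_{i=1}^N \mu_i \varepsilon_i \geq 0$, completing the verification. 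The main obstacle is the quadratic bookkeeping, namely recognizing that the weighted sum of per-subsystem quadratic forms is \emph{exactly} the block-structured form paired with $R_\delta$, so that the coupling substitution reduces it to the left-hand side of \eqref{e:MC1}.
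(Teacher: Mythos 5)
Your proof is correct and follows essentially the same route as the paper's: per-subsystem application of \eqref{sf1}--\eqref{sf2} with the abstract internal input $\hat w = M\hat\op_2(\hat x,p,l)$ (admissible by \eqref{e:MC3}), weighted summation, regrouping of the quadratic terms into the block form paired with $R_{\delta}$, and elimination of the coupling term via the identity $w-\hat w = M(\op_2(x,p,l)-\hat\op_2(\hat x,p,l))$ together with \eqref{e:MC1}. The only (harmless) deviation is in establishing \eqref{sfg1}: you exploit the max structure of the infinity norm and take $\tilde{\alpha}(r)=\min_{i}\mu_i\alpha_i(r)$, whereas the paper bounds the norm by the sum of the componentwise errors and inverts a $\mathcal{K}_\infty$ function $\overline{\alpha}$ defined through an optimization over a hyperplane; both constructions are valid.
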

\begin{proof}
	First, we define ${z}=\intcc{ z_1;\ldots;z_N}$, $\hat{z}=\intcc{\hat z_1;\ldots;\hat z_N}$, ${z'}=\intcc{ z'_1;\ldots;z'_N}$, and $\hat{z}'=\intcc{\hat z'_1;\ldots;\hat z'_N}$, where $z_i=(x_i,p_i,l_i)$, $\hat{z}_i=(\hat{x}_i,p_i,l_i)$ $z'_i=(x'_i,p'_i,l'_i)$, and $\hat{z}'_i=(\hat{x}'_i,p'_i,l'_i), ~\forall i\in [1;N]$.
	
	Now, we show that $\eqref{sfg1}$ holds for some $\mathcal{K}_{\infty}$ function $\tilde{\alpha}$. Consider any $z_i\in X_i$, $\hat{z}_i\in {\hat{X}}_i$, $\forall i\in[1;N]$. Then, one gets
		\begin{align*}\notag
		&\Vert \op(z)\!-\!\hat{\op}(\hat{z})\Vert
		\le\!\sum_{i=1}^N \!\Vert \op_{1i}(z_i)\!-\!\hat \op_{1i}(\hat{z}_i)\Vert 
	\le \!\sum_{i=1}^N \!\alpha_{i}^{-1}(\mathcal{S}_i( z_i, \hat z_i))\leq \overline{\alpha}\big(\tilde{\SF}( z, \hat z)\big),
		\end{align*}where $\overline\alpha(s)=\max\limits_{\hat{s}\geq 0}\Bigg\{\sum_{i=1}^N\alpha^{-1}_i(s_i)|\mu^T\hat{s}=s\Bigg\}$, $\hat{s}=\intcc{s_1;\ldots;s_N}\in\R^N$ and $\mu=\intcc{\mu_1;\ldots;\mu_N}$. Hence, $\eqref{sfg1}$ is satisfied with  $\tilde{\alpha}= \overline{\alpha}^{-1}$.
	
	Now, we show that $\eqref{sfg2}$ holds.
	Let $\tilde{\sigma}=\max\limits_{i\in [1,N]}\{\sigma_i\}$, $\tilde{\varepsilon}=\sum_{i=1}^N\mu_i \varepsilon_i$, and consider the following chain of inequalities
		\begin{align}\notag
		\hat{\SF}(z',\hat z')&\!\!=\!\!\sum_{i=1}^N \mu_i\mathcal{S}(z'_{_i},\hat{z}'_{i})\\\label{rewrite}
		&\!\!\leq\!\!\sum_{i=1}^N\!\mu_i\bigg(\!\!\sigma_i\mathcal{S}_i(z_i,\hat{z}_i)+\varepsilon_i
		+\begin{bmatrix}
		w_i\!-\!\hat w_i\\
		\op_{2i}(z_i)\!-\!\hat \op_{2i}(\hat z_i)
		\end{bmatrix}^T\overbrace{\begin{bmatrix}
			R_i^{11}&R_i^{12}\\
			R_i^{21}&R_i^{22}
			\end{bmatrix}}^{R_i:=}
		\begin{bmatrix}
		w_i\!-\!\hat w_i\\
		\op_{2i}(z_i)\!-\!\hat \op_{2i}(\hat z_i)
		\end{bmatrix}\bigg).
		\end{align}Using condition \eqref{e:MC1}, and the definition of matrix $R_{\delta}$, the inequality \eqref{rewrite} can be rewritten as
		\begin{align*}
		\tilde{\SF}(z',\hat{z}')
		\!\leq\!\!&\sum_{i=1}^N\mu_i\sigma_i\mathcal{S}_i(z_i,\hat{z}_i)+\sum_{i=1}^N\mu_i \varepsilon_i
		+\begin{bmatrix}
		\begin{bmatrix}
		w_1\\
		\vdots\\
		w_N
		\end{bmatrix}\!\!-\!\!\begin{bmatrix}
		\hat w_1\\
		\vdots\\
		\hat w_N
		\end{bmatrix}\\
		\op_{21}(z_1)\!\!-\!\!\hat \op_{21}(\hat z_1)\\
		\vdots\\
		\op_{2N}(z_N)-\hat \op_{2N}(\hat z_N)
		\end{bmatrix}^T\!\!\!\!R_{\delta}\!
		\begin{bmatrix}
		\begin{bmatrix}
		w_1\\
		\vdots\\
		w_N
		\end{bmatrix}\!\!-\!\!\begin{bmatrix}
		\hat w_1\\
		\vdots\\
		\hat w_N
		\end{bmatrix}\\
		\op_{21}(z_1)\!\!-\!\!\hat \op_{21}(\hat z_1)\\
		\vdots\\
		\op_{2N}(z_N)\!\!-\!\!\hat \op_{2N}(\hat z_N)
		\end{bmatrix}\\
		\leq&\!\sum_{i=1}^N\!\mu_i\sigma_i\mathcal{S}_i(x_i,\hat{x}_i)\!+\!\!\sum_{i=1}^N\!\mu_i \varepsilon_i\!
		+\!\!\begin{bmatrix}
		\op_{21}(z_1)\!\!-\!\!\hat \op_{21}(\hat z_1)\\
		\vdots\\
		\op_{2N}(z_N)\!\!-\!\!\hat \op_{2N}(\hat z_N)
		\end{bmatrix}^T
		\begin{bmatrix}
		M\\
		I
		\end{bmatrix}^T\!\!\!\!R_{\delta}\begin{bmatrix}
		M\\
		I
		\end{bmatrix}
		\begin{bmatrix}
		\op_{21}(z_1)\!-\!\hat \op_{21}(\hat z_1)\\
		\vdots\\
		\op_{2N}(z_N)\!-\!\hat \op_{2N}(\hat z_N)
		\end{bmatrix}\\
		\leq&\sum_{i=1}^N\mu_i\sigma_i\mathcal{S}_i(z_i,\hat{z}_i)+\sum_{i=1}^N\mu_i \epsilon_i\\
		\leq&\tilde{\sigma}\tilde{\SF}(z,\hat z)+\tilde{\varepsilon}.
		\end{align*}which satisfies $\eqref{sfg2}$, and implies that $\tilde{\SF}$ is indeed an alternating simulation function from $\hat{T}(\hat{\Sigma})$ to $T(\Sigma)$. 
\end{proof}
\begin{remark}\label{setw}
	Condition \eqref{e:MC1} is a linear matrix inequality which can be verified by some semi-definite programming tools (e.g. YALMIP \cite{YALMIP}). 
	Note that condition \eqref{e:MC3} is required to have a well-defined interconnection of abstractions and is automatically fulfilled if one constructs the internal input sets of each abstractions $\hat{T}_i(\hat{\Sigma}_i)$ such that the equality $M\prod_{i=1}^N {\hat Y}_{2i} = \prod_{i=1}^N {\hat W}_{i}$ holds.
\end{remark}

Remark that similar compositionality result as in Theorem \ref{thm:3} was proposed in \cite{7857702}. Since \cite{7857702} is concerned with \emph{infinite} abstractions (a continuous-time control system with potentially a lower dimension), extra matrices (i.e. $W$, $\hat{W}$, $H$ in \cite[equation (9)]{7857702}) are required to formulate the dissipativity-type conditions. However, as our work is mainly concerned with symbolic models, we formulate the dissipativity-type conditions without requiring those extra matrices.
\section{Construction of Symbolic Models}\label{1:IV}\label{SM}
In this section, we consider $\Sigma\!\!=\!\!(\!\mathbb X,\!P,\!\mathbb W,\!F,\!\mathbb Y_1, \!\mathbb Y_2, \!h_1,\! h_2\!)$ as an infinite, deterministic switched system, 
and assume its external output map $h_1$ satisfies the following general Lipschitz assumption: there exists $\ell\!\in\!\KK$ such that: $\Vert h_1(x)\!-\!h_1(x')\Vert\leq \ell(\Vert x\!-\!x'\Vert)$ $\forall x,x'\in \mathbb X$. 
In addition, the existence of an augmented-storage function between $T(\Sigma)$ and its symbolic model is established under the assumption that $\Sigma_p$ is so-called incrementally passive ($\delta$-P) \cite{arxiv} as defined next.
\begin{definition}\label{def:SFD1} 
	System $\Sigma_{p}$  is $\delta$-P if there exist functions $ S_p:\mathbb X\times \mathbb X \to \mathbb{R}_{\geq0} $, $\underline{\alpha}_p \in \mathcal{K}_{\infty}$, a symmetric matrix $Q_p$ of appropriate dimension, and constant $0<\kappa_p<1$, such that for all $x,\hat x\in \mathbb{X}$, and for all $w,\hat w\in \mathbb{W}$  
		\begin{align}\label{e:SFC11}
		\underline{\alpha}_p (\Vert x-\hat{x}\Vert ) \leq S_p(x,\hat{x})
		\end{align}
		\begin{align}\label{e:SFC22}
		S_p&(f_p(x,w),f_p(\hat x,\hat w))
		\!\leq\! \kappa_p S_p(x,\hat{x})+\begin{bmatrix}
		w-\hat w\\
		h_2(x)-h_2(\hat{x})
		\end{bmatrix}^T\overbrace{\begin{bmatrix}
			Q^{11}_{p}&Q^{12}_{p}\\
			Q^{21}_{p}&Q^{22}_{p}
			\end{bmatrix}}^{Q_{p}:=}
		\begin{bmatrix}
		w-\hat w\\
		h_2(x)-h_2(\hat x)
		\end{bmatrix}.\\\notag
		\end{align}
\end{definition}

We say that $S_p$ and $Q_p$, $\forall p\in P$, are multiple $\delta$-P storage functions and supply rates, respectively, for system $\Sigma$ if they satisfy \eqref{e:SFC11} and \eqref{e:SFC22}. Moreover, if $S_p=S_{p'}$ and $Q_p=Q_{p'}$, $\forall p,p'\in P$, we omit the index $p$ in \eqref{e:SFC11}, \eqref{e:SFC22}, and say that $S$ and $Q$ are a common $\delta$-P storage function and supply rate for system $\Sigma$.
 
Now, we show how to construct a symbolic model $\hat T(\hat{\Sigma})$ of transition system $T(\Sigma)$ associated to the switched system $\Sigma$ where $\Sigma_{p}$  is $\delta$-P.
\begin{definition}\label{smm} Consider a transition system $T(\Sigma)=(X,U,W,\TF,Y_1, Y_2, \op_1, \op_2)$, associated to the switched system $\Sigma=(\mathbb X,P,\mathbb W,F,\mathbb Y_1, \mathbb Y_2, h_1, h_2)$, where $\mathbb X,\mathbb W$ are assumed to be finite unions of boxes. Let $\Sigma_{p}$ be $\delta$-P as in Definition \ref{def:SFD1}. Then one can construct a finite transition system (a symbolic model) $\hat{T}(\hat{\Sigma})=(\hat{X},\hat{U},\hat{W},\hat{\TF},\hat{Y}_1, \hat{Y}_2, \hat{\op}_1, \hat{\op}_2)$ where:
	\begin{itemize}
		\item $\hat{X}=\hat{\mathbb{X}}\times P\times \{0,\cdots,k_d-1\}$, where $\hat{\mathbb{X}}=[\mathbb{X}]_{\eta}$ and $0<\eta\leq\emph{span}(\mathbb{X})$ is the state set quantization parameter; 
		\item $\hat{U}=U=P$ is the external input set;
		\item $\hat{W}=[\mathbb{W}]_{{\varpi}}$, where $0\leq{{\varpi}}\leq\emph{span}(\mathbb W)$ is the internal input set quantization parameter.
		\item $(\hat{x}',p',l')\in \hat{\TF}((\hat{x},p,l),\hat{u},\hat{w})$ if and only if $\Vert f_p(\hat{x},\hat{w})-\hat{x}'\Vert\leq \eta$, $\hat{u}=p$ and the following scenarios hold:
		\begin{itemize}
			\item $l<k_d-1$, $p'=p$ and $l'=l+1$;
			\item $l=k_d-1$, $p'=p$ and $l'=k_d-1$;
			\item $l=k_d-1$, $p'\neq p$ and $l'=0$;
		\end{itemize}
		\item $\hat{Y}_1=Y_1,\hat{Y}_2=Y_2$;
		\item $\hat{\op}_1:\hat{X}\rightarrow \hat{Y}_1$ is the external output map defined as $\hat{\op}_1(\hat{x},p,l)={\op}_1(\hat{x},p,l)=h_1(\hat{x})$;
		\item $\hat{\op}_2:\hat{X}\rightarrow \hat{Y}_2$ is the internal output map defined as $\hat{\op}_2(\hat{x},p,l)={\op}_2(\hat{x},p,l)=h_2(\hat{x})$;
	\end{itemize} 
\end{definition}

\begin{remark} Although one can freely construct $\hat{W}$, in the context of networks of subsystems, it should be constructed 
	in such a way that the interconnection of finite transition subsystems is well-defined (cf. Remark \ref{setw}).   
\end{remark}

Let us point out some differences between the symbolic model in Definition \ref{smm} and the one proposed in \cite{Girard}. There is no
	distinction between internal and external inputs and outputs in the symbolic model defined in \cite{Girard}, whereas their distinctions in our work play a major role in interconnecting subsystems and providing the main compositionality result.

In the following, we impose assumptions on function $S_p$ in Definition \ref{def:SFD1} which are used to prove some of the main results later. 
\begin{assumption}\label{ass1} 
	There exists $\mu \geq 1$ such that
		\begin{align}\label{mue}
		\forall x,y \in \mathbb{X},~~ \forall p,p' \in P,~~ S_p(x,y)\leq \mu S_{p'}(x,y). 
		\end{align}
\end{assumption}
\vspace{0.1cm}
Assumption \ref{ass1} is an incremental version of a similar assumption that is used to prove input-to-state stability of switched systems under constrained switching assumptions \cite{VU}.
\begin{assumption}\label{ass2} 
	Assume that $\forall p\!\in\! P$, $\exists\gamma_p\!\in\!\mathcal{K}_{\infty}$ such that
		\begin{align}\label{tinq} 
		\forall x,y,z \in \mathbb{X},~~S_p(x,y)\leq S_p(x,z)+\gamma_p(\Vert y-z\Vert).
		\end{align}
\end{assumption}
\vspace{0.1cm}
Assumption \ref{ass2} is shown in \cite{zamani2014symbolic} to be a non-restrictive condition provided that one is interested to work on a compact subset of $\mathbb{X}\times\mathbb{X}$.

Now, we establish the relation between $T(\Sigma)$ and $\hat{T}(\hat{\Sigma})$, introduced above, via the notion of augmented-storage function as in Definition \ref{sf}. 

\begin{theorem}\label{thm1}
	Consider a switched system $\Sigma=(\mathbb X,P,\mathbb W,F,\mathbb Y_1, \mathbb Y_2, h_1, h_2)$ with its equivalent transition system $T(\Sigma)=(X,U,W,\TF,Y_1, Y_2, \op_1, \op_2)$. Let $\Sigma_{p}$ be $\delta$-P as in Definition \ref{def:SFD1}. Consider a finite transition system $\hat{T}(\hat{\Sigma})=(\hat{X},\hat{U},\hat{W},\hat{\TF},\hat{Y}_1, \hat{Y}_2, \hat{\op}_1, \hat{\op}_2)$ constructed as in Definition \ref{smm}. Suppose that Assumptions \ref{ass1} and \ref{ass2} hold. Let $\epsilon>1$ and define ${\kappa}\!=\!\max_{p\in P}\left\{\kappa_p\right\}$. If, $k_d\geq \epsilon\frac{\ln (\mu)}{\ln (\frac{1}{{\kappa}})}+1$, and there exists a symmetric  matrix $\tilde{Q}$ such that $\forall q\in \{1,\ldots,k_d-1\},\tilde{Q}-\kappa^{\frac{-q}{\epsilon}}\sum_{p=1}^{m}Q_p\succeq0$,
	then function $\mathcal{V}$ defined as 
		\begin{align}\label{sm}
		\mathcal{V}((x,p,l),(\hat{x},p,l))\Let{\kappa^{\frac{-l}{\epsilon}}\sum_{p=1}^{m}S_p(x,\hat{x})},
		\end{align}is an augmented-storage function from $\hat{T}(\hat{\Sigma})$ to $T(\Sigma)$.
\end{theorem}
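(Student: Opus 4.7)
The plan is to verify the two defining conditions of Definition \ref{sf} directly for the candidate $\mathcal{V}$, handling the three transition scenarios of Definition \ref{smm} separately. To establish \eqref{sf1}, observe that $\hat{\op}_1(\hat{x},p,l) = h_1(\hat{x})$ in the symbolic model, so the Lipschitz assumption on $h_1$ gives $\Vert h_1(x)-h_1(\hat{x})\Vert \leq \ell(\Vert x-\hat{x}\Vert)$. The lower bound \eqref{e:SFC11} applied to any single mode (say mode $1$), together with the elementary inequalities $S_1(x,\hat{x}) \leq \sum_{q=1}^m S_q(x,\hat{x}) \leq \mathcal{V}((x,p,l),(\hat{x},p,l))$ (the last step because $\kappa \in (0,1)$ and $l\geq 0$ imply $\kappa^{-l/\epsilon}\geq 1$), yields $\underline{\alpha}_1(\Vert x-\hat{x}\Vert)\leq \mathcal{V}$; then $\alpha := \underline{\alpha}_1\circ \ell^{-1}\in\mathcal{K}_{\infty}$ is the class-$\mathcal{K}_{\infty}$ function required by \eqref{sf1}.

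For \eqref{sf2}, I fix $(x,p,l)\in X$, $(\hat{x},p,l)\in\hat{X}$, $\hat{u}=p$, $w\in W$, $\hat{w}\in\hat{W}$, and let $x' = f_p(x,w)$ denote the unique concrete successor (using that $\Sigma$ is deterministic). Since $\hat{\mathbb{X}} = [\mathbb{X}]_\eta$ is the $\eta$-lattice of a finite union of boxes, I can pick $\hat{x}'\in\hat{\mathbb{X}}$ with $\Vert f_p(\hat{x},\hat{w})-\hat{x}'\Vert\leq \eta$; the mode/clock scenarios in Definitions \ref{tsm} and \ref{smm} are identical, so with the same $(p',l')$ prescribed by the active scenario, $(\hat{x}',p',l')$ is a legal abstract successor. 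To bound $\sum_q S_q(x',\hat{x}')$ I would chain, for each $q\in P$, Assumption \ref{ass2} (to absorb the quantization error), Assumption \ref{ass1} (to replace $S_q$ by $\mu S_p$), and the $\delta$-P inequality \eqref{e:SFC22} on $S_p$, yielding $S_q(x',\hat{x}') \leq \mu\kappa_p S_p(x,\hat{x}) + \mu\bigl[w-\hat{w};\,h_2(x)-h_2(\hat{x})\bigr]^T Q_p \bigl[w-\hat{w};\,h_2(x)-h_2(\hat{x})\bigr] + \gamma_q(\eta)$. Summing over $q$ and using $\kappa_p\leq \kappa$ produces a one-step bound on $\sum_q S_q(x',\hat{x}')$ in terms of $S_p(x,\hat{x})$, a supply form with $Q_p$, and the bounded $\eta$-residual $\sum_q \gamma_q(\eta)$.

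Multiplying through by the clock factor $\kappa^{-l'/\epsilon}$ and comparing with $\kappa^{-l/\epsilon}\sum_q S_q(x,\hat{x})$, I then dispatch the three scenarios. When $l' = l+1$ (clock advancing, no switch) or $l' = l = k_d - 1$ (clock saturated, no switch), the clock ratio $\kappa^{-1/\epsilon}$ combines with the $\kappa$ factor from the $\delta$-P step to yield a rate essentially of order $\kappa^{1-1/\epsilon}$ after absorbing the inter-mode constants inherited from Assumption \ref{ass1}. The binding scenario is $l = k_d - 1$, $p'\neq p$ (an actual switch): the clock drops from $\kappa^{-(k_d-1)/\epsilon}$ to $1$, and the dwell-time hypothesis $k_d \geq \epsilon\ln\mu/\ln(1/\kappa)+1$ is precisely the condition that makes $\mu\kappa^{(k_d-1)/\epsilon}\leq 1$, so the mode-to-mode penalty $\mu$ is paid off by the clock reset. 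Taking $\sigma$ as the maximum of the three case-rates therefore gives $0<\sigma<1$. For the supply term, each scenario contributes a form $\kappa^{-l'/\epsilon}\cdot(\text{const})\cdot[w-\hat{w};h_2(x)-h_2(\hat{x})]^T Q_p[\ldots]$; since $l'\in\{0,\ldots,k_d-1\}$ this is dominated uniformly by $\kappa^{-q/\epsilon}\sum_{p'}Q_{p'}$ for some $q\in\{1,\ldots,k_d-1\}$, and the matrix hypothesis $\tilde{Q}\succeq \kappa^{-q/\epsilon}\sum_{p'}Q_{p'}$ lets me pick $R:=\tilde{Q}$ as a single supply matrix valid in every scenario, while the $\eta$-residuals (bounded uniformly in $l'$) collapse into the constant $\varepsilon\geq 0$ of \eqref{sf2}.

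The main obstacle I anticipate is the constant-tracking in this last step: after summing over $q$ and rescaling by the clock, I must show that all three per-scenario ratios are simultaneously bounded by a single $\sigma<1$, which for the switch case is possible precisely because the ``$+1$'' in the dwell-time bound produces a strict inequality $\mu\kappa^{(k_d-1)/\epsilon}<1$ after multiplying by the residual $\kappa$ from the $\delta$-P step; and I must ensure that the scaling in front of the $Q_p$-supply lines up with $\kappa^{-q/\epsilon}\sum_{p'}Q_{p'}$ for exactly the values $q\in\{1,\ldots,k_d-1\}$ assumed in the matrix hypothesis. The switch scenario is tight for the state contraction (and motivates the precise form of the dwell-time bound), while the matrix hypothesis is tight for the supply rate.
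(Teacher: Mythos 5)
Your treatment of \eqref{sf1} and your identification of the abstract successor $\hat{x}'$ are fine, and you correctly locate where the dwell-time hypothesis enters (the switch scenario, where the clock weight drops from $\kappa^{-(k_d-1)/\epsilon}$ to $1$ and $\mu\kappa^{(k_d-1)/\epsilon}\leq 1$ cancels the factor $\mu$). The genuine gap is in your one-step bound for the \emph{non-switching} scenarios. You propose to bound every summand $S_q(x',\hat{x}')$ by first invoking Assumption \ref{ass1} to pass to $\mu S_p$ and then applying \eqref{e:SFC22} for the active mode $p$, arriving at $S_q(x',\hat{x}')\leq \mu\kappa_p S_p(x,\hat{x})+\mu\,\mathcal{T}^TQ_p\mathcal{T}+\gamma_q(\eta)$. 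After summing over $q$ and re-expressing $S_p(x,\hat{x})$ in terms of $\sum_q S_q(x,\hat{x})$, the coefficient in front of $\mathcal{V}((x,p,l),(\hat{x},p,l))$ in scenarios with $l'=l+1$ or $l'=l=k_d-1$ is at least $\mu\kappa_p\kappa^{-1/\epsilon}$ (and in fact picks up an additional factor of $m$ or $\mu$ from the resummation). Your claim that this is ``essentially of order $\kappa^{1-1/\epsilon}$ after absorbing the inter-mode constants'' is where the argument fails: $\kappa^{(\epsilon-1)/\epsilon}$ can be arbitrarily close to $1$ (it is for $\epsilon$ close to $1$), and nothing in the hypotheses forces $\mu\kappa^{(\epsilon-1)/\epsilon}<1$ --- the paper's own second example has $\mu=1.63$, $\kappa=0.7$, $\epsilon=1.01$, for which this product exceeds $1$. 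So your per-scenario rates are not all below $1$, and \eqref{sf2} is not established.

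The paper's proof avoids this by never spending the factor $\mu$ on non-switching transitions: it applies the $\delta$-P decrease to each summand $S_{p}$ directly, obtaining $\sum_{p}S_{p}(x',\hat{x}')\leq\sum_{p}\big(\kappa_{p}S_{p}(x,\hat{x})+\mathcal{T}^TQ_{p}\mathcal{T}+\gamma_{p}(\eta)\big)$ and hence the clean rate $\kappa^{(\epsilon-1)/\epsilon}<1$ after dividing by $\kappa^{(l+1)/\epsilon}$; Assumption \ref{ass1} is invoked exactly once, in the switch scenario, where the clock reset pays for it. (This also matters for the supply term: in your version the indefinite matrix $Q_p$ gets multiplied by $\mu$, which is not covered by the hypothesis $\tilde{Q}-\kappa^{-q/\epsilon}\sum_{p}Q_p\succeq0$, since scaling an indefinite matrix by $\mu>1$ does not preserve the ordering.) To repair your argument you would need to bound $\sum_q S_q(x',\hat{x}')$ without routing each term through Assumption \ref{ass1}, i.e.\ follow the paper in treating the sum of storage functions as decreasing at rate $\kappa$ under the active mode's dynamics, and reserve $\mu$ for switching instants only.
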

\begin{proof} Given the Lipschitz assumption on $h_1$ and since, $\forall p\in P$, $\Sigma_p$ is $\delta$-P , from \eqref{e:SFC11}, $\forall (x,p,l)\in X$ and $ \forall (\hat{x},p,l) \in \hat{X}
	$, we have 
		\begin{align*}
	&\Vert \op_1(x,p,l)-\hat{\op}_1(\hat{x},p,l)\Vert\!=\!\Vert h_1(x)-\hat{h}_1(\hat{x})\Vert
		\!\leq \!\ell(\Vert x-\hat{x}\Vert)
		\!\leq\!\ell\circ\underline{\alpha}^{-1}_p(S_p(x,\hat{x}))
		\!<\!\ell\circ\underline{\alpha}^{-1}_p\left(\sum_{p=1}^{m}S_p(x,\hat{x})\right)\\
		&\!=\!\ell\circ\underline{\alpha}^{-1}_p\left({\kappa^{\frac{l}{\epsilon}}}\mathcal{V}((x,p,l),(\hat{x},p,l))\right)
		\!\leq\!\ell\circ\underline{\alpha}^{-1}_p\left(\mathcal{V}((x,p,l),(\hat{x},p,l))\right)
	\!\leq\!\hat{\alpha}\left(\mathcal{V}((x,p,l),(\hat{x},p,l))\right),
		\end{align*}
	where $\hat{\alpha}\!=\!\max\limits_{p\in P}\{\ell\circ\underline{\alpha}^{-1}_p\}$. Hence \eqref{sf1} is satisfied with $\alpha\!=\!\hat{\alpha}^{-1}$.

	Now from \eqref{tinq} and Definition \ref{smm}, $\forall x\!\in \!\mathbb{X}, \forall \hat{x} \!\in\! \mathbb{\hat{X}},\forall w \!\in\! \mathbb{W},\forall \hat{w} \in {\hat{\mathbb{W}}}$, we have 
	
		\begin{align*}
		S_p(f_p(x,w),\hat{x}')&\leq S_p(f_p(x,w),f_p(\hat{x},\hat{w}))\!+\!\gamma_p(\Vert \hat{x}'\!\!-\!\!f_p(\hat{x},\hat{w})\Vert)\\
		&\leq S_p(f_p(x,w),f_p(\hat{x},\hat{w}))+\gamma_p(\eta).
		\end{align*}for any $\hat{x}'$ such that  $(\hat{x}',p',l')\in \hat{\TF}((\hat{x},p,l),\hat{u},\hat{w})$.
	Let $\mathcal{T}(w,x,\hat{w},\hat{x})\Let [w-\hat w;h_2(x)\!-\!h_2(\hat{x})]$ and note that by \eqref{e:SFC22}, one gets 
		\begin{align*}
		S_p&(f_p(x,w),f_p(\hat x,\hat w))\leq \!\kappa_p S_p(x,\hat{x})
		+\mathcal{T}(w,x,\hat{w},\hat{x})^TQ_{p}
		\mathcal{T}(w,x,\hat{w},\hat{x}).
		\end{align*}
	Hence, $\forall x\in \mathbb{X}, \forall \hat{x} \in \mathbb{\hat{X}}$, and $\forall w \in \mathbb{W},\forall \hat{w} \in {\hat{\mathbb{W}}}$, one obtains
		\begin{align}\label{fe}
		S_p&(f_p(x,w),\hat{x}')\leq \!\kappa_p S_p(x,\hat{x})
		\!\!+\!\!\mathcal{T}(w,x,\hat{w},\hat{x})^T\!Q_{p}\!
		\mathcal{T}(w,x,\hat{w},\hat{x})\!+\!\gamma_p(\eta),
		\end{align}for any $\hat{x}'$ such that  $(\hat{x}',p',l')\in \hat{\TF}((\hat{x},p,l),\hat{u},\hat{w})$.
	Now, in order to show function $\mathcal{V}$ defined in \eqref{sm} satisfies \eqref{sf2}, we consider the different scenarios in Definition \ref{smm} as follows.  
		\begin{itemize}
			\item$l<k_d-1$, $p'=p$ and $l'=l+1$, using \eqref{fe}, we have
			\begin{small}
				\begin{align*}
				&\mathcal{V}(\!(x',p',l'),(\hat{x}',p',l')\!)\!=\!\frac{\sum_{p'=1}^{m}S_{p'}(x',\hat{x}')}{\kappa^{\frac{l'}{\epsilon}}}\!\!=\!\!\frac{\sum_{p=1}^{m}S_p(f_p(x,w),\hat{x}')}{\kappa^{\frac{l+1}{\epsilon}}}
				\!\!\leq\!\!\frac{\sum_{p=1}^{m}\kappa_pS_p(x,\hat{x})}{\kappa^{\frac{1}{\epsilon}}\kappa^{\frac{l}{\epsilon}}}\!+\!\frac{\sum_{p=1}^{m}(\mathcal{\tilde{T}}(w,x,\hat{w},\hat{x},Q_{p})\!+\!\gamma_p(\eta))}{\kappa^{\frac{l+1}{\epsilon}}}\\
				&\leq{\kappa}^{\frac{\epsilon-1}{\epsilon}}\mathcal{V}((x,p,l),(\hat{x},p,l))\!+\!\frac{\sum_{p=1}^{m}\mathcal{\tilde{T}}(w,x,\hat{w},\hat{x},Q_{p})\!}{\kappa^{\frac{l+1}{\epsilon}}}+\!\frac{\sum_{p=1}^{m}\!\gamma_p(\eta)}{\kappa^{\frac{k_d}{\epsilon}}}.
				\end{align*}
			\end{small}
			\item $l=k_d-1$, $p'=p$ and $l'=k_d-1$, using \eqref{fe} and $\frac{\epsilon-1}{\epsilon}<1$, one gets
			\begin{small}
				\begin{align*}
				&\mathcal{V}(\!(x',p',l'),(\hat{x}',p',l')\!)\!=\!\frac{\sum_{p'=1}^{m}S_{p'}(x',\hat{x}')}{\kappa^{\frac{l'}{\epsilon}}}\!\!=\!\!\frac{\sum_{p=1}^{m}S_p(f_p(x,w),\hat{x}')}{\kappa^{\frac{l}{\epsilon}}}
				\!\!\leq\!\!\frac{\sum_{p=1}^{m}\kappa_pS_p(x,\hat{x})}{\kappa^{\frac{l}{\epsilon}}}\!+\!\frac{\sum_{p=1}^{m}(\mathcal{\tilde{T}}(w,x,\hat{w},\hat{x},Q_{p}) \!+\!\gamma_p(\eta))}{\kappa^{\frac{l}{\epsilon}}}\\
				&\leq{\kappa}^{\frac{\epsilon-1}{\epsilon}}\mathcal{V}((x,p,l),(\hat{x},p,l))\!+\!\frac{\sum_{p=1}^{m}\mathcal{\tilde{T}}(w,x,\hat{w},\hat{x},Q_{p})}{\kappa^{\frac{l}{\epsilon}}}+\!\frac{\sum_{p=1}^{m}\!\gamma_p(\eta)}{\kappa^{\frac{k_d}{\epsilon}}}.
				\end{align*}
			\end{small}
			\item $l=k_d-1$, $p'\neq p$ and $l'=0$, using \eqref{fe}, $k_d\geq \epsilon\frac{\ln (\mu)}{\ln (\frac{1}{{\kappa}})}+1\Leftrightarrow\mu{\kappa^{\frac{k_d-1}{\epsilon}}_{p}}\leq1$, and $\frac{\epsilon-1}{\epsilon}<1$, one has
			\begin{small}
				\begin{align*}
				&\mathcal{V}(\!(x',p',l'),(\hat{x}',p',l')\!)\!=\!\!\frac{\sum_{p'=1}^{m}S_{p'}(x',\hat{x}')}{\kappa^{\frac{l'}{\epsilon}}}\!\!\leq\!\!\mu\!\sum_{p=1}^{m} S_{p}(f_{p}(x,w),\hat{x}')
				\!\!\leq\!\!\frac{\mu\kappa^{\frac{k_d-1}{\epsilon}}\!\!\left(\sum_{p=1}^{m}(\kappa_{p}S_{p}(x,\hat{x})\!\!+\!\!\mathcal{\tilde{T}}(w,x,\hat{w},\hat{x},Q_{p})\!+\!\gamma_{p}(\eta))\right)}{\kappa^{\frac{k_d-1}{\epsilon}}}\\
				&\!\leq\!\frac{\sum_{p=1}^{m}\kappa_pS_p(x,\hat{x})}{\kappa^{\frac{k_d-1}{\epsilon}}}\!\!+\!\!\frac{\sum_{p=1}^{m}\!(\mathcal{\tilde{T}}(w,x,\hat{w},\hat{x},Q_{p})\!+\!\gamma_p(\eta))}{\kappa^{\frac{k_d-1}{\epsilon}}}
				\!\!\leq\!\!{\kappa}^{\frac{\epsilon-1}{\epsilon}}\mathcal{V}((x,p,l),(\hat{x},p,l))\!\!+\!\!\frac{\sum_{p=1}^{m}\mathcal{\tilde{T}}(w,x,\hat{w},\hat{x},Q_{p})}{\kappa^{\frac{k_d-1}{\epsilon}}}\!\!+\!\!\frac{\sum_{p=1}^{m}\!\gamma_p(\eta)}{\kappa^{\frac{k_d}{\epsilon}}}.
				\end{align*}
			\end{small}
		\end{itemize}
	Let $\tilde{\gamma}=\kappa^{\frac{-k_d}{\epsilon}}\sum_{p=1}^{m}\gamma_{p}$, $\forall (x,p,l)\!\in\! X$, $ \forall (\hat{x},p,l) \!\in\! \hat{X}$, $\forall w \!\in\! W$, and $\forall \hat{w} \in {\hat{W}}$. Since $\hat{\op}_2(\hat{x},p,l)=h_2(\hat{x})$ and ${\op}_2({x},p,l)=h_2({x})$, one obtains
		\begin{align*}
		\mathcal{V}&((x',p',l'),(\hat{x}',p',l'))
		\leq{\kappa}^{\frac{\epsilon-1}{\epsilon}}\mathcal{V}((x,p,l),(\hat{x},p,l))+\tilde{\gamma}(\eta)+\!\!\begin{bmatrix}
		w\!-\!\hat w\\
		\op_2(x,p,l)\!-\!\op_2(\hat{x},p,l)
		\end{bmatrix}^T\!\!\tilde{Q}
		\begin{bmatrix}
		w\!-\!\hat w\\
		\op_2(x,p,l)\!-\!\hat \op_2(\hat{x},p,l)
		\end{bmatrix}\!\!,
		\end{align*} 
	Hence, inequality \eqref{sf2} is satisfied 
	with $\sigma={\kappa}^{\frac{\epsilon-1}{\epsilon}}$, $R=\tilde{Q}$, $\varepsilon=\tilde{\gamma}(\eta)$. Thus, $\mathcal{V}$ is an augmented-storage function from  $\hat{T}(\hat{\Sigma})$ to $T(\Sigma)$. 
	Using exactly the same argument, we can show the $\mathcal{V}$ is an augmented-storage function from from $T(\Sigma)$ to $\hat{T}(\hat{\Sigma})$.  		
\end{proof}
\begin{remark}\label{nonl-common}
	If equation \eqref{e:SFC22} is satisfied with the same $Q_p, \forall p\!\!\in \!\!P$, then function $\mathcal{V}$ in Theorem \ref{thm1} reduces to $\mathcal{V}((x,p,l),(\hat{x},p,l))\!\!\Let \!\!{\kappa^{\frac{-l}{\epsilon}}S_p(x,\hat{x})}$. In addition, if $\Sigma$ admits a common $\delta$-P storage function, function $\mathcal{V}$ reduces to $\mathcal{V}((x,p,l),(\hat{x},p,l))\Let S(x,\hat{x})$.
\end{remark}
\begin{remark}\label{linear}
	For affine switched systems$\big(i.e.,\mathbf{x}(k\!+\!1)= A_{\mathsf{p}(k)}\mathbf{x}(k)\!\!+\!\!D_{\mathsf{p}(k)}\omega(k)\!\!+\!\!B_{\mathsf{p}(k)},\mathbf{y}_1(k)\!\!=\!\!C_1\mathbf{x}(k),\mathbf{y}_2(k)\!\!=\!\!C_2\mathbf{x}(k)\big)$, we can restrict attention to $\delta$-P storage functions of the form $S_p(x,\hat{x})\!=\!{(x\!-\!\hat x)^TZ_p(x\!-\!\hat x)}, Z_p\!\!\succ\! 0$. It is readily seen that such functions always satisfy \eqref{e:SFC11} and \eqref{mue}. Moreover, inequality \eqref{e:SFC22} reduces to the linear matrix inequality 
		\begin{align}\label{lmi}
		\!\!\!\!\!\begin{bmatrix}
		\theta_p A_p^TZ_pA_p & A^T_pZ_pD_p \\
		D^T_pZ_pA_p & \theta_p D^T_pZ_pD_p
		\end{bmatrix}\!\!\preceq\!\!\begin{bmatrix}
		\kappa_p Z_p\!+\!C_2^TQ_p^{22}C_2&  C_2^TQ_p^{21}\\
		Q^{12}_pC_2&  Q^{11}_p\end{bmatrix}\!\!
		\end{align}in which $Z_p$ and $Q_p$ can be determined by semi-definite programming, where $\theta_p>1, 0<\kappa_p<1$.
	Consequently, it can be readily verified that $\varepsilon$ in \eqref{sf2} would be defined as $\varepsilon=c_p\lambda_{\max}(\!Z_p\!)$, for some $c_p\!\!>\!0$ depending on $\theta_p$ and the dimensions of $Z_p$.
\end{remark}

\section{Case Study}\label{1:V}

\subsection{Model of road traffic}\label{1}
Consider the switched system $\Sigma$ which is adapted from \cite{Kibangou} and described by  
	\begin{align*}
	\Sigma:\left\{
	\begin{array}{rl}
	\mathbf{x}(k+1)\!&=A\mathbf{x}(k)+ B_{\mathsf{p}(k)},\\
	\mathbf{y}(k)\!&=\mathbf{x}(k),
	\end{array}\right.
	\end{align*} where $A\in\R^{50\times 50}$ is a matrix with elements $\{A\}_{qq}=0.9-\frac{\tau v}{d}$ if $q\in Q_1=\{q ~\text{is} ~\text{odd}~ | q\in [1;50]\}$ and $\{A\}_{qq}=0.65-\frac{\tau v}{d}$ if $q\in Q_2=\{q ~\text{is}~\text{even}~ | q\in [1;50]\}$, $\{A\}_{(q+1)q}=\{A\}_{1(50)}=\frac{\tau v}{d}$, $\forall q\in [1;50]$, and all other elements are identically zero,
where $\tau=\frac{10}{60\times60}$, $d=1$, and $v=120$ are sampling time interval in hours, length in kilometers, and the flow
speed of the vehicles in kilometers per hour, respectively. The vector $B_p\in \R^{50}$ is defined as ${B}_p=\intcc{b_{1p_1};\ldots; b_{25p_{25}}}$ such that $b_{ip_i}=[0;0]$ if $p_i=1$, and 
$b_{ip_i}=[0;12]$ if $p_i=2$, $\forall i\in[1,25]$, $\intcc{p_{1};\ldots;p_{25}}\in P=\{1,2\}^{25}$, where $P$ is the set of modes of $\Sigma$.

The chosen switched system $\Sigma$ here is the
model of a circular road around a city (Highway) divided in $50$ cells of $1000$ meters each. The road has
$25$ entries and $50$ exits. A cell $q$ has an entry and exit if $q\in Q_1$ and has an exit and no entry if $q\in Q_2$. All the entries are controlled by traffic signals, denoted $s_r, r\in[1;25]$.
In $\Sigma$, the dynamic we want to observe is the density of traffic, given in vehicles per cell, for each cell $q$
of the road. 
During the sampling time interval $\tau$, we assume that $12$ vehicles can pass the entry controlled by a traffic signal $s_r$ when it is green. Moreover, $10\%$ of vehicles that are in cells $q\in Q_1$, and $35\%$ of vehicles that are in cells $q\in Q_2$ go out using available exits. 

\begin{figure}
	\begin{center}
		\includegraphics[height=6.5cm]{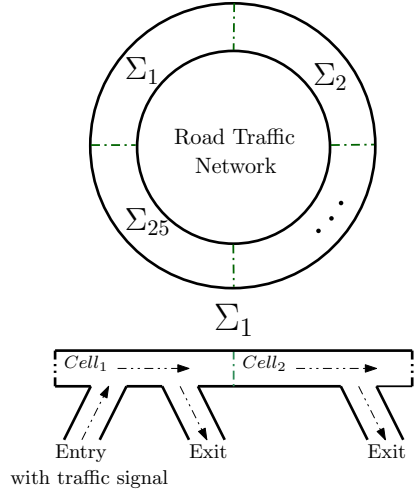}
		\caption{{\small Model of a road traffic network in a circular highway composed of 25 identical links, each link has two cell}.}
		\label{rdn}
	\end{center}
\end{figure}
Now, in order to apply the compositionality result, we introduce subsystems $\Sigma_i$, $\forall i \in [1;25]$. Each subsystems $\Sigma_i$ represents the dynamic of one link of the entire highway, where each link contains $2$ cells, one entry, and two exits, as schematically illustrated in Figure \ref{rdn}. The subsystems $\Sigma_i$ is described by
	\begin{align*}
	\Sigma_i:\left\{
	\begin{array}{rl}
	\mathbf{x}_i(k+1)&=A_i\mathbf{x}_i(k)+ D_{i}w_i(k)+B_{i\mathsf{p}_i(k)},\\
	\mathbf{y}_{1i}(k)&=\mathbf{x}_i(k),\\
	\mathbf{y}_{2i}(k)&=C_{2i}\mathbf{x}_i(k),
	\end{array}\right.\\
	A_i\!\!=\!\!\begin{bmatrix}\!
	0.9\!-\!\!\frac{\tau v}{d} \!\!& 0  \\
	\frac{\tau v}{d} \!\!& 0.65\!-\!\!\frac{\tau v}{d}
	\!\end{bmatrix}\!\!,
	D_i\!\!=\!\!\begin{bmatrix}
	\frac{\tau v}{d} \\
	0\\
	\end{bmatrix}\!\!,
	B_{i1}\!\!=\!\!\begin{bmatrix}
	0  \\
	0\\
	\end{bmatrix}\!\!,\!
	B_{i2}\!\!=\!\!\begin{bmatrix}
	12  \\
	0\\
	\end{bmatrix}\!\!,
	C_{2i}\!\!=\!\!\begin{bmatrix}
	0\\1
	\end{bmatrix}^T\!\!\!\!\!\!,
	\end{align*}
and the set of modes is $P_i=\{1,2\}$, $\forall i\in[1;25]$. 
Clearly, $\Sigma=\mathcal{I}(\Sigma_1,\ldots,\Sigma_{25})$, where the elements of the coupling matrix $M$ are $\{M\}_{(i+1)i}=\{M\}_{1(25)}=1$, $\forall i\in [1;25]$, and all other elements are identically zero.

Note that, for any $i\in[1;25]$,
conditions \eqref{e:SFC11} and \eqref{e:SFC22} are satisfied with $S_{ip_i}(x_i,\hat{x}_i)\!=\!{(x_i-\hat x_i)^TZ_{ip_i}(x_i-\hat x_i)}$, $Z_{ip_i}=I_2$, $\underline{\alpha}_{ip_i}(s)=s^2$, $\kappa_{ip_i}= 0.98$, $Q_{ip}^{11}=0.3527$, $Q_{ip}^{12}=Q_{ip}^{21}=0.0937$, $Q_{ip}^{22}=-0.6785$ $\forall p_i\in P_i$.  
Moreover, since $S_{ip_i}= S_{ip'_i},\forall p,p'\in P$, and according to Remarks \ref{nonl-common} and \ref{linear}, function $\mathcal{V}_i((x_i,p_i,l_i),(\hat{x}_i,p_i,l_i))= \!S_{i}(x_i,\hat{x}_i)$ is an augmented-storage function from $\hat{T}_i(\hat{\Sigma}_i)$, constructed as in Definition \ref{smm}, to $T_i(\Sigma_i)$, defined in Definition \ref{tsm}.
Now, by choosing $\mu_i=1,\forall i\in[1;25]$ and finite internal input sets $\hat{W}_i$ of $\hat{T}_i(\hat\Sigma_i)$ in such a way that $\prod_{i=1}^{25}\!\hat{W}_i\!\!=\!\!M\prod_{i\!=\!1}^{25}\!\hat{X}_i$, condition \eqref{e:MC1} and \eqref{e:MC3} are satisfied.
Therefore, applying Theorem \ref{thm:3}, function $\tilde{\mathcal{S}}((x,p,l),(\hat{x},p,l))\!=\!\!\sum_{i=1}^{25}\!\mathcal{V}_i((x_i,p_i,l_i),(\hat{x}_i,p_i,l_i))$
is an alternating simulation function from $\hat{\mathcal{I}}(\hat{T}_1(\hat{\Sigma}_1),\ldots,\hat{T}_{25}(\hat{\Sigma}_{25}))$ to 
$\mathcal{I}(T_1(\Sigma_1),\ldots,T_{25}(\Sigma_{25}))$.

Let us now design a controller for $\Sigma$ via symbolic models $\hat{T}_i(\hat{\Sigma}_i)$ such that controllers maintain
	the density of traffic lower than $30$ vehicles per cell (safety constraint), and to allow only 2 consecutive red lights for each traffic signal (fairness constraint). The former
	constraint implies that each vehicle can keep a $30$-meter safe distance from the one directly in front. The latter
	constraint is a way to avoid the trivial solution (always red) of the safety constraint and ensures fairness between modes 1 and 2. The idea here is to design local controllers
	for symbolic models $\hat{T}_i(\hat{\Sigma}_i)$, and then refine them to the ones for concrete switched subsystems $\Sigma_i$. To do so, the local controllers are designed
	while assuming that the other subsystems meet their specifications. This approach, called assume-guarantee reasoning \cite{Rajamani}, allows for the compositional synthesis of controllers.

Note that the direct computation of the symbolic model for the original $50$-dimensional system $\Sigma$ is not possible monolithically. To the best of our knowledge, there does not exist any software toolbox for constructing symbolic models of systems with this number of state variables. On the other hand, we are able to construct the interconnected symbolic model and controllers for the $50$-dimensional system $\Sigma$ by applying the proposed compositionality method here. We leverage software tool \texttt{SCOTS} \cite{Rungger} for constructing symbolic models and controllers for $\Sigma_i$ compositionally with the state quantization parameter $\eta_i=0.03$ and the computation times are amounted to $10.2s$ and $0.014s$, respectively. Figure \ref{st} shows the applied modes of sample subsystem $\Sigma_i$. Moreover, the closed-loop state trajectories of $\Sigma$, consisting of $50$ cells, are illustrated in Figure \ref{st1}.

\begin{figure}
	\begin{center}
		\includegraphics[height=5.5cm, width=10cm]{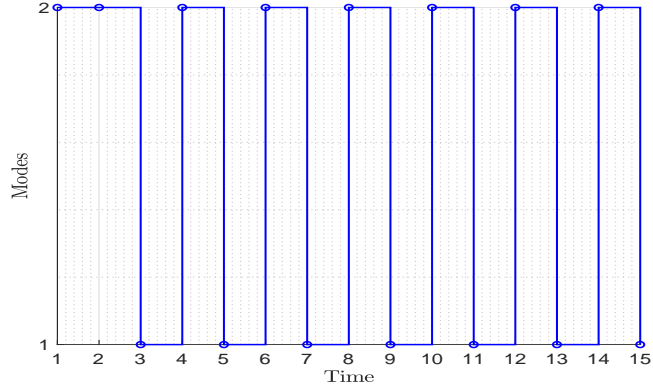}
		\caption{{\small Applied modes of sample subsystem $\Sigma_i$.}}
		\label{st}
	\end{center}
\end{figure}
\begin{figure}
	\begin{center}
		\includegraphics[height=5.5cm, width=10cm]{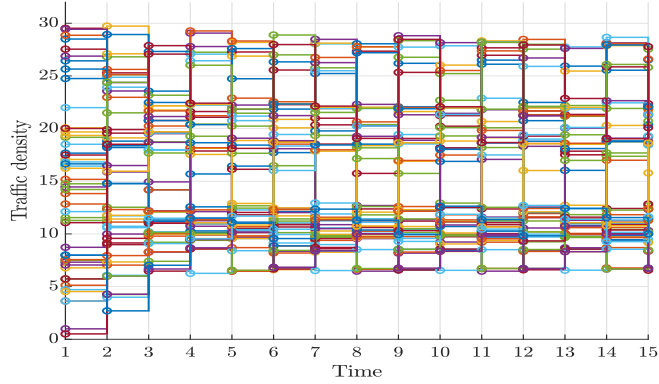}
		\caption{{\small Closed-loop state trajectories of system $\Sigma$ consisting of $50$ cells.}}
		\label{st1}
	\end{center}
\end{figure}
\subsection{Fully Connected Network}\label{2}
In this example, we apply our results to an interconnected switched systems $\Sigma$ composed of $N\geq2$ linear switched subsystems $\Sigma_i, i\in [1;N],$ admitting multiple $\delta$-P storage functions and supply rates. In this respect, we choose the dynamics' parameters such that neither condition $\eqref{e:SFC11}$ nor $\eqref{e:SFC22}$ holds with common $\delta$-P storage functions and supply rates for all subsystems. In particular, as all subsystems are affine switched systems, we  choose their the dynamics' parameters such that the solution of the linear matrix inequality $\eqref{lmi}$ with common $Z_{i}$  and $Q_{i}$ (i.e. $Z_{ip_i}\!=\!Z_{ip'_i}$ and $Q_{ip_i}\!=\!Q_{ip'_i}$, $\forall p,p'\!\in\! P,i\!\in \![1;N]$) is infeasible. Hence, non of the subsystems admits a common $\delta$-P storage function and supply rate.
The dynamic of the
interconnected switched system $\Sigma$ has the set of modes $P\!\!=\!\!\{1,2\}^N, N\!\!\in\!\! \N_{\ge2}$, and it is given by 
	\begin{align*}\Sigma:\left\{
	\begin{array}[\relax]{rl}
	\mathbf{x}(k+1)&=A_{\mathsf{p}(k)}\mathbf{x}(k)+B_{\mathsf{p}(k)},\\
	\mathbf{y}(k)&=\mathbf{x}(k).
	\end{array}\right.
	\end{align*}The vector $B_p\in \R^{n}$, where $n=2N$, is defined as $\{B\}_{i1}=B_{p_i}$ such that $B_{p_i}=[-0.9;0.5]$ if $p_i=1$, and 
$B_{p_i}=[0.9;-0.2]$ if $p_i=2$,
$\forall i,j\in [1;N],i\neq j$. The elements of the matrix $A_p\in\R^{n\times n}$ are as follows:
	\begin{align*} 
	\{A\}_{ij}\!=\!\begin{bmatrix}
	0.015 \!\!\!&0  \\
	0\!\!\!& 0.015\\
	\end{bmatrix}\!\!,
	\{A\}_{ii}\!\!=\!\!A_{p_i}\!\!=\!\!\left\{
	\begin{array}{lr}
	\!\!\!\!\begin{bmatrix}
	0.05 \!\!\quad&0  \\
	0.9& 0.03\\
	\end{bmatrix}\quad\!\! \text{if}~ p_i=1,\vspace{0.1cm}\\
	\!\!\!\!\begin{bmatrix}
	0.02 &-1.2 \\
	0  &0.05\\
	\end{bmatrix}\quad \text{if}~  p_i=2.
	\end{array}\right.
	\end{align*}
Now, by introducing $\Sigma_i$ described by
	\begin{align*} 
	\Sigma_i:\left\{
	\begin{array}{rl}
	\mathbf{x}_i(k+1)&=A_{i\mathsf{p}_i(k)}\mathbf{x}_i(k)+\omega_i(k)+B_{i\mathsf{p}_i(k)},\\
	\mathbf{y}_{i1}(k)&=\mathbf{x}_i(k),\\
	\mathbf{y}_{i2}(k)&=\mathbf{x}_i(k),
	\end{array}\right.\\
	A_{i1}\!\!=\!\!\begin{bmatrix}
	0.05 \!\!\!\!\!\!\quad&0  \\
	0.9& 0.03\\
	\end{bmatrix}\!\!,
	A_{i2}\!\!=\!\!\begin{bmatrix}
	0.02 &-1.2 \\
	0  &0.05\\
	\end{bmatrix}\!\!,
	B_{i1}\!\!=\!\!\begin{bmatrix}
	-0.9  \\
	0.5\\
	\end{bmatrix}\!\!,
	B_{i2}\!\!=\!\!\begin{bmatrix}
	0.9 \\
	-0.2\\
	\end{bmatrix}\!\!,
	\end{align*}and the set of modes is $P_i\!\!=\!\!\{1,2\}$,
one can readily verify that $\Sigma\!=\!\mathcal{I}(\Sigma_1,\ldots,\Sigma_{N})$, where the elements of the coupling matrix $M$ are $\{M\}_{ii}\!\!=\!\!0_2$ and $\{M\}_{i,j}\!\!=\!\!\{A\}_{i,j}$, $\forall i,j\in\!\! [1;N],i\!\neq\! j$. 
Note that, for any $i\in[1;N]$,
conditions \eqref{e:SFC11} and \eqref{e:SFC22} are satisfied with $S_{ip_i}(x_i,\hat{x}_i)\!=\!{(x_i-\hat x_i)^TZ_{ip_i}(x_i-\hat x_i)}$,
	\begin{align*}
	Z_{i1}=\begin{bmatrix}
	0.3030&    0.0087\\
	0.0087&    0.4938
	\end{bmatrix},Z_{i2}=\begin{bmatrix}
	0.4899&   -0.0033\\
	-0.0033&    0.4291
	\end{bmatrix},
	\end{align*}$Q_{i1}=10^{-3}L_{i1},\kappa_{i1}\!=\!0.7,\underline{\alpha}_{i1}(s)\!=\!0.3s^2,,Q_{i2}=10^{-3}L_{i2},\kappa_{i2}\!=\!0.7,\underline{\alpha}_{i2}(s)\!=\!0.4s^2\!$, where
	\begin{align*}
	&L_{i1}\!\!=\!\!\!\begin{bmatrix}
	2.7\!&\!    0\!&\!   -1\!&\!   \!-3\!\\
	0\!&\!    1\!&\!  -3\!&\!  \!0\!\\
	-1\!&\!   -3\!&\!   -201.3\!&\!   \!-17\!\\
	-3\!&\!   0\!&\!   -1.7\!&\!    \!270.8\!
	\end{bmatrix}\!\!\!,L_{i2}\!\!=\!\!\!\begin{bmatrix}\!\!\!\!\!
	2.9\!&\!    0\!&\!   -1.4\!&\!    \!2.7\!\\
	0\!&\!    1.6\!&\!    2.7\!&\!   \!0\!\\
	-1.4\!&\!    2.7\!&\!    156\!&\!   \!17.5\!\\
	2.7\!&\!   0\!&\!    17.5\!&\!   \!-294\!
	\end{bmatrix}
	\end{align*}Since Assumption \ref{ass1} and $k_d\geq  \epsilon\frac{\ln (\mu)}{\ln (1/\kappa_p)}+1$ hold with $\mu=1.63$, $k_d=3$, $\epsilon=1.01$, one can easily find a matrix $\tilde{Q}$ such that $\forall q\in \{1,2\},\tilde{Q}-0.7^{\frac{-q}{\epsilon}}\sum_{p=1}^{2}Q_p\succeq0$ by using semi-definite programming such that function $\mathcal{V}_i((x_i,p_i,l_i),(\hat{x}_i,p_i,l_i))= {\sum_{i=1}^{N}S_{ip_i}(x_i,\hat{x}_i)}{\kappa^{{-l}/{\epsilon}}_{p_i}}$ is an augmented-storage function from $\hat{T}_i(\hat{\Sigma}_i)$ to $T_i(\Sigma_i)$. 
Choose an arbitrary $N$, then by choosing $\mu_1\!=\!\cdots\!=\!\mu_{N}=1$ and finite internal input sets $\hat{W}_i$ of $\hat{T}_i(\hat\Sigma_i)$ in such a way that $\prod_{i=1}^{N}\hat{W}_i=M\prod_{i=1}^{N}\hat{X}_i$, condition \eqref{e:MC1} and \eqref{e:MC3} are satisfied. Hence, using Theorem \ref{thm:3}, function $\tilde{\mathcal{S}}((x,p,l),(\hat{x},p,l))\!=\!\!\sum_{i=1}^{N}\!\mathcal{V}_i((x_i,p_i,l_i),(\hat{x}_i,p_i,l_i))$ is an alternating simulation function from $\hat{\mathcal{I}}(\hat{T}_1(\hat{\Sigma}_1),\ldots,\hat{T}_{N}(\hat{\Sigma}_{N}))$ to $\mathcal{I}(T_1(\Sigma_1),\ldots,T_{N}(\Sigma_{N}))$.

Given $N\geq5$, $X_i=[0,1]$, and $\eta_i=0.1$, we observe that constructing the symbolic model for the original system $\Sigma$ is only possible compositionally even with this small range of state set and coarse quantization parameters. The computation time for constructing symbolic models of $\Sigma_i$ is amounted to $0.53s$, using tool \texttt{SCOTS} \cite{Rungger} with the state quantization parameter $\eta_i=0.1$.
\section{Conclusion}
In this work, we proposed a compositional scheme for the construction of symbolic models of interconnected discrete-time switched systems. First, we used a notion of augmented-storage functions in order to construct compositionally an alternating simulation function that is used to quantify the error between the output behavior of the interconnected switched system and that of its abstraction. 
Furthermore, under some assumptions ensuring incremental passivity of each mode of switched subsystems, we showed how to construct symbolic models together with their corresponding augmented-storage functions of the concrete systems. 

\bibliographystyle{alpha}       

\bibliography{refcdc}

\newcommand{\etalchar}[1]{$^{#1}$}
\begin{thebibliography}{ZMEM{\etalchar{+}}14}

\bibitem[AMP16]{murat}
M.~Arcak, C.~Meissen, and A.~Packard.
\newblock {\em Networks of dissipative systems}.
\newblock SpringerBriefs in Electrical and Computer Engineering. Springer
  International Publishing, 2016.

\bibitem[BK08]{Katoen}
Christel Baier and Joost-Pieter Katoen.
\newblock {\em Principles of Model Checking (Representation and Mind Series)}.
\newblock The MIT Press, 2008.

\bibitem[CGG13]{Corronc}
E.~Le Corronc, A.~Girard, and G.~Goessler.
\newblock Mode sequences as symbolic states in abstractions of incrementally
  stable switched systems.
\newblock In {\em Proceedings of 52nd IEEE Conference on Decision and Control},
  pages 3225--3230, 2013.

\bibitem[dWOK12]{Kibangou}
Carlos~Canudas de~Wit, Luis~Leon Ojeda, and Alain~Y. Kibangou.
\newblock Graph constrained-ctm observer design for the grenoble south ring.
\newblock In {\em Proceedings of 13th IFAC Symposium on Control in
  Transportation Systems}, pages 197--202, 2012.

\bibitem[GGM16]{Gossler}
A.~Girard, G.~G{\"o}ssler, and S.~Mouelhi.
\newblock Safety controller synthesis for incrementally stable switched systems
  using multiscale symbolic models.
\newblock {\em IEEE Transactions on Automatic Control}, 61(6):1537--1549, 2016.

\bibitem[GP09]{Girard2009566}
A.~Girard and G.~J. Pappas.
\newblock Hierarchical control system design using approximate simulation.
\newblock {\em Automatica}, 45(2):566 -- 571, 2009.

\bibitem[GPT10]{Girard}
A.~Girard, G.~Pola, and P.~Tabuada.
\newblock Approximately bisimilar symbolic models for incrementally stable
  switched systems.
\newblock {\em IEEE Tranctions on Automatic Control}, 55(1):116--126, 2010.

\bibitem[HAT17]{omar}
O.~Hussein, A.~Ames, and P.~Tabuada.
\newblock Abstracting partially feedback linearizable systems compositionally.
\newblock {\em IEEE Control Systems Letters}, 1(2):227--232, 2017.

\bibitem[HSR98]{Rajamani}
T.~A. Henzinger, Q.~Shaz, and S.~K. Rajamani.
\newblock You assume, we guarantee: Methodology and case studies.
\newblock In {\em Proceedings of International Conference on Computer Aided
  Verification}, pages 440--451, 1998.

\bibitem[KAZ18]{Kim}
Eric~S. Kim, Murat Arcak, and Majid Zamani.
\newblock Constructing control system abstractions from modular components.
\newblock In {\em Proceedings of the 21st International Conference on Hybrid
  Systems: Computation and Control}, pages 137--146, 2018.

\bibitem[Kea11]{Michael}
Michael Keating.
\newblock {\em The Simple Art of SoC Design}.
\newblock Springer-Verlag, 2011.

\bibitem[Lib03]{liberzon}
Daniel Liberzon.
\newblock {\em Switching in Systems and Control}.
\newblock Birkh{\"a}user Basel, 2003.

\bibitem[Lof04]{YALMIP}
J.~Lofberg.
\newblock Yalmip : a toolbox for modeling and optimization in matlab.
\newblock In {\em Proceedings of the International Conference on Robotics and
  Automation}, pages 284--289, 2004.

\bibitem[MGW17]{meyer}
P.~J. Meyer, A.~Girard, and E.~Witrant.
\newblock Compositional abstraction and safety synthesis using overlapping
  symbolic models.
\newblock {\em IEEE Transactions on Automatic Control}, 63(6):1835--1841, 2017.

\bibitem[Mor96]{Morse}
A.~S. Morse.
\newblock Supervisory control of families of linear set-point controllers -
  part i. exact matching.
\newblock {\em IEEE Transactions on Automatic Control}, 41(10):1413--1431,
  1996.

\bibitem[MPS95]{MalerPnueliSifakis95}
O.~Maler, A.~Pnueli, and J.~Sifakis.
\newblock On the synthesis of discrete controllers for timed systems.
\newblock In {\em Proceedings of the 12th Symposium on Theoretical Aspects of
  Computer Science}, pages 229--242, 1995.

\bibitem[MSSM18]{Majumdar}
K.~Mallik, A-K Schmuck, S.~Soudjani, and R.~Majumdar.
\newblock Compositional synthesis of finite state abstractions.
\newblock {\em IEEE Transactions on Automatic Control}, 2018.

\bibitem[PPB16]{7403879}
G.~Pola, P.~Pepe, and M.~D.~Di Benedetto.
\newblock Symbolic models for networks of control systems.
\newblock {\em IEEE Transactions on Automatic Control}, 61(11):3663--3668,
  2016.

\bibitem[PPB18]{8115304}
G.~Pola, P.~Pepe, and M.~D.~D. Benedetto.
\newblock Decentralized supervisory control of networks of nonlinear control
  systems.
\newblock {\em IEEE Transactions on Automatic Control}, 63(9):2803--2817, Sept
  2018.

\bibitem[RZ16]{Rungger}
Matthias Rungger and Majid Zamani.
\newblock S{C}{O}{T}{S}: A tool for the synthesis of symbolic sontrollers.
\newblock In {\em Proceedings of the 19th International Conference on Hybrid
  Systems: Computation and Control}, pages 99--104, 2016.

\bibitem[SG17]{SAOUD}
Adnane Saoud and Antoine Girard.
\newblock Multirate symbolic models for incrementally stable switched systems.
\newblock In {\em Proceedings of 20th IFAC World Congress}, pages 9278 -- 9284,
  2017.

\bibitem[SGZ18]{arxiv}
A.~Swikir, A.~Girard, and M.~Zamani.
\newblock From dissipativity theory to compositional synthesis of symbolic
  models.
\newblock In {\em Proceedings of the 4th Indian Control Conference}, pages
  30--35, 2018.

\bibitem[SZ18]{swikir}
Abdalla Swikir and Majid Zamani.
\newblock Compositional synthesis of finite abstractions for networks of
  systems: {A} small-gain approach.
\newblock {\em CoRR}, abs/1805.06271, 2018.

\bibitem[Tab09]{Tabu}
Paulo Tabuada.
\newblock {\em Verification and Control of Hybrid Systems: A Symbolic
  Approach}.
\newblock Springer Publishing Company, Incorporated, 1st edition, 2009.

\bibitem[TI08]{Tazaki2008}
Y.~Tazaki and J.~I. Imura.
\newblock Bisimilar finite abstractions of interconnected systems.
\newblock In {\em Proceedings of the 11th International Conference on Hybrid
  Systems: Computation and Control}, pages 514--527, 2008.

\bibitem[VCL07]{VU}
L.~Vu, D.~Chatterjee, and D.~Liberzon.
\newblock Input-to-state stability of switched systems and switching adaptive
  control.
\newblock {\em Automatica}, 43(4):639 -- 646, 2007.

\bibitem[ZA17]{7857702}
M.~Zamani and M.~Arcak.
\newblock Compositional abstraction for networks of control systems: A
  dissipativity approach.
\newblock {\em IEEE Transactions on Control of Network Systems},
  5(3):1003--1015, 2017.

\bibitem[ZMEM{\etalchar{+}}14]{zamani2014symbolic}
M.~Zamani, P.~Mohajerin~Esfahani, R.~Majumdar, A.~Abate, and J.~Lygeros.
\newblock Symbolic control of stochastic systems via approximately bisimilar
  finite abstractions.
\newblock {\em IEEE Transactions on Automatic Control}, 59(12):3135--3150,
  2014.

\end{thebibliography}

\appendix

\end{document}